\newenvironment{proof}{\noindent{\bf Proof:}\rm}{\hfill\hfill$\Box$\par\medbreak}
 \newtheorem{theorem}{Theorem}[section]
 \newtheorem{proposition}[theorem]{Proposition}
\newtheorem{remark}[theorem]{Remark}
\newcommand{\real}{\mathbb{R}}
\newcommand{\e}{\mathbf{e}}
\newcommand{\X}{\mathbf{X}}
\newcommand{\E}{\mathbb{E}}
\newcommand{\R}{\mathbf{R}}
\newcommand{\Tr}{\operatorname{Tr}}
\newcommand{\diag}{\operatorname{diag}}
\newcommand{\blkdiag}{\operatorname{blkdiag}}
\newcommand{\Prob}{\operatorname{Prob}}
\newcommand{\subscr}[2]{{#1}_{\textup{#2}}}
\newcommand{\norm}[1]{\|#1\|}
\long\def\cut#1{{}}
\newcommand\sdb[1]{{\color{black}{#1}}} 
\begin{document}

\title{Sensor Selection via Randomized Sampling \thanks{The author is with the Department of Electrical and Computer Engineering, Michigan State University, East Lansing, MI, USA. Email: \texttt{shaunak@egr.msu.edu}}}

\author{Shaunak D. Bopardikar}

\maketitle
\thispagestyle{plain}

\begin{abstract}
Given a linear dynamical system, we consider the problem of constructing an approximate system using only a subset of the sensors out of the total set such that the observability Gramian of the new system is approximately equal to that of the original system. Our contributions are as follows. First, we present a randomized algorithm that samples the sensors with replacement as per specified distributions. For specific metrics of the observability Gramian such as the trace or the maximum eigenvalue, we derive novel bounds on the number of samples required to yield a high probability lower bound on the metric evaluated on the approximate Gramian. Second, with a different distribution, we derive high probability bounds on other standard metrics used in sensor selection, including the minimum eigenvalue or the trace of the Gramian inverse. This distribution requires a number of samples which is larger than the one required for the trace and the maximum eigenvalue, but guarantees non-singularity of the approximate Gramian if the original system is observable with high probability. Third, we demonstrate how the randomized procedure can be used for recursive state estimation using fewer sensors than the original system and provide a high probability upper bound on the initial error covariance. We supplement the theoretical results with several insightful numerical studies and comparisons with competing greedy approaches.

\end{abstract}

\section{Introduction}
Sensor selection is a classic problem arising in engineering design wherein the goal is to choose sensors out of a set in order to estimate the quantity of interest. In applications related to complex cyber physical systems, it is essential to understand fundamental limits on how the system performance varies with the number of samples of sensors. This paper considers the problem of analyzing observability of a linear time invariant system as a function of the number of samples of sensors. We present a randomized algorithm and derive novel probabilistic bounds on the observability of the resulting system.

\subsection{Related work}
\sdb{Sensor selection is a well studied problem within the control community (see survey \cite{van2001}). }Early work on sensor selection includes reference \cite{bian:06}, which established NP-hardness of a sensor selection problem and reference \cite{joshi:09}, which provided a heuristic based on a convex optimization formulation of another sensor selection problem. However, no guarantees were provided on the performance of the heuristics. There has been a lot of recent interest in the area of control of complex networks, especially on the dual problem of finding a subset of control inputs to steer the state of a dynamical system to a target state. Reference \cite{pasqualetti:14} proposes several metrics and a formal mathematical framework and examines tradeoffs between number of nodes controlled and the control energy required. Reference \cite{olshevsky:14} demonstrates that the problem of finding a small set of variables to affect with an input which guarantees controllability of the resulting system is NP-hard and provides a greedy heuristic that has good empirical performance. Reference \cite{summers:16} identifies several metrics, including the ones considered in this paper and demonstrates sub/super-modularity property, and further shows that the use of a simple greedy method can solve the combinatorial problem of the minimum number of sensors required to maximize the selected criterion within a constant $(1-1/e)$ factor of the optimal set. Reference \cite{pequito:16} addresses the problem of optimal input/output placement  that guarantees structural controllability/observability by reducing the problem to a weighted maximum matching problem and solves that using existing efficient polynomial time algorithms. Reference \cite{tzoumas:16} addresses the problem of optimal sensor placement to design a Kalman filter with a desired bound on the estimation error covariance, derives fundamental limitations, proves super-modularity of the log-determinant function and provides an efficient approximation algorithm. In the context of Kalman filtering, reference \cite{jawaid:15} addresses optimal sensor scheduling to select a subset of sensors at every measurement step to minimize the estimation error at the subsequent time step. Reference \cite{zhang:17} examines the trace of the a priori and the a posteriori error covariance matrices, shows that the optimization problem is NP-hard and characterizes systems for which greedy algorithms are optimal. With a fixed budget on the number of sensors that can be used, recent work~\cite{hashemi2018} addresses the use of randomization for sensor scheduling and uses weak submodularity to performance guarantees on the mean square error (MSE) of the linear estimator. 

\sdb{Several metrics for quantifying the degree of observability have been studied in literature. Early reference \cite{muller1972} proposes the minimum eigenvalue of the observability Gramian that quantifies modes that are difficult to estimate from the measurements. However, this worst case metric may lead to impractical solutions in applications such as tubular reactors as reported in reference \cite{van2000}. If the criterion is to monitor only major changes in the state variables, then reference \cite{van2000} proposes the use of the trace or the maximum eigenvalue of the observability Gramian as metric. This paper also considers the problem of selecting a subset of the sensors so that the resulting system has approximately similar observability Gramian as the original, when measured in the trace or the maximum eigenvalue metric. } 

\subsection{Contributions}
This paper presents a randomization-based paradigm for the problem of sensor selection. Given a linear dynamical system, we consider the problem of \sdb{randomly sampling} sensors out of the total set. Our contributions are as follows. 

First, we formalize the performance of a simple randomized sampling algorithm and derive a high probability lower bound on two specific metrics of the observability Gramian, the trace and the maximum eigenvalue. We provide a bound on the number of samples required by the algorithm and we characterize the expected number of unique sensors being selected.  

However, as is well known, simply providing a guarantee on these two metrics does not guarantee non-singularity of the approximate observability Gramian, unless the system is observable through every sensor. Therefore, as a second contribution, we design a new distribution to sample the sensors so that any other standard metric of the eigenvalues of the observability Gramian, such as the minimum eigenvalue or the trace of the inverse of the Gramian can be well approximated. The accuracy is relative to the \emph{global value} obtained by using \emph{all} of the available sensors. This is a key distinction with respect to the sensor selection literature where the performance has largely been characterized relative to the optimal (but difficult to compute) subset of a given cardinality.  Our procedure guarantees that the resulting approximate Gramian is non-singular with high probability, if the original system is observable. 

Third, we demonstrate how the proposed randomized procedure can be used for recursive least squares estimation using fewer sensors than the original system and provide a high probability upper bound on the initial error covariance relative to that for the original system. We supplement our theoretical bounds with numerical results on a synthetic dataset. In particular, we focus on: (i) quantitative validation of the theoretical results on a numerical example using the minimum eigenvalue, the trace and the maximum eigenvalue as metrics, (ii) comparison with a greedy heuristic that selects the sensors based on the contribution of each sensor toward the chosen metric and (iii) empirical study of application of the technique for Kalman filtering. The empirical performance of the proposed approach shows significant improvement over the greedy heuristic. 

 The randomization viewpoint has been used extensively to solve complex control design problems over the past two decades, e.g., references \cite{TempoBaiDabbene97}, \cite{vidyasagar1998} and \cite{calafiore2006} to name a few. These techniques adopt a sampling based approach and provide probabilistic bounds on the resulting controller performance. In contrast, our approach leverages insights and analysis techniques from the field of random matrix theory. We refer the interested reader to the tutorial paper \cite{tropp:11b} for a detailed survey in the field. In this context, reference \cite{preciado:16} studies the eigenvalue spectra of the controllability Gramian of systems with random state matrices. The computation of the Gramian can be approximated efficiently using randomized sampling as was demonstrated in reference \cite{drineas:05}. The key contribution of this paper is to develop a novel randomized sampling scheme that can be used to select the sensors and provide guarantees on the observability Gramian of the resulting system with respect to the Gramian of the original system with high probability. A key feature of our approach is that it is \emph{one-shot} in the sense that a user samples just once a priori and the sampling outcome provides an expression for the approximate Gramian leading to probabilistic guarantees. This paradigm offers a novel alternative to the sensor selection formulations and techniques in literature and addresses difficult to analyze metrics such as the minimum eigenvalue of the observability Gramian.

\subsection{Organization of this paper}
This paper is organized as follows. Section~\ref{sec:problem} provides a formal statement of the problem and provides a brief background on the metrics used. Section~\ref{sec:algorithm} formalizes our algorithm and includes the main theoretical results. Section~\ref{sec:estimation} discusses application of the proposed approach to recursive least squares estimation and other related applications. Section~\ref{sec:numerics} summarizes numerical results of the proposed approach on synthetically generated data. Section~\ref{sec:conclusion} summarizes our findings and identifies directions for future research. The mathematical proofs of the theoretical results are presented in the appendix.

\section{Problem Formulation and Background}\label{sec:problem}
In this section, we formalize the problem statement and provide some background information related to metrics on the observability Gramian.

\subsection{Model}
Consider a linear time invariant system described by
\begin{align}\label{eq:model}
x_{t+1} &= Ax_t + w_t, \nonumber \\
y_t &= Cx_t + v_t,
\end{align}
where at any given time instant $t$, $x_t \in \real^n$ is the state, $n$ is the dimension of the state space, $y_t \in \real^m$ are the measurement from all of the sensors, $w_t \sim \mathcal{N}(0, Q)$ is the process noise, $v_t \sim \mathcal{N}(0, R)$ is the measurement noise and $m$ is the number of sensors. We assume that $Q$ is positive semi-definite and $R$ is positive definite.

A sample of sensors is a subset $S \subseteq \{1,\dots, m\}^c$, with possibly repeated indices allowing for sampling with replacement. Here, $c$ is the number of samples. Each realization of $S$ can be mapped to a sensor selection matrix $\Gamma \in \mathcal{B}^{m\times m}$ in which the $i$-th diagonal entry is equal to the number of times sensor $i$ appears in $S$ and the off-diagonal entries of $\Gamma$ equal zero. To analyze the observability of system~\eqref{eq:model} with a sensor selection matrix $\Gamma$, we consider the $T$-step observability Gramian,
\begin{equation}\label{eq:gramian}
W_{\Gamma, T} := \sum_{t = 0}^{T-1} (A')^tC' \Gamma C A^t.
\end{equation}

\subsection{Metrics on the Observability Gramian}
It is known that $W_{\Gamma, T}$ quantifies the ability to observe the system state from the output, as shown in reference~\cite{van2001}. In particular, the \emph{maximal} energy of the output obtained by observing a system with initial state $x_0$ is $x_0'W_{\Gamma, T} x_0$ and is measured by the maximum eigenvalue $\subscr{\lambda}{max}(W_{\Gamma, T})$. The trace, $\Tr(W_{\Gamma, T})$ represents the \emph{average} output energy over random initial states. Constructing the dual of system~\eqref{eq:model}, the inverse of the observability Gramian of \eqref{eq:model} represents the minimal energy required to transfer the state of the dual system from $0$ to a given target state. In this case, akin to reference~\cite{summers:16}, the problem becomes one of selecting a $\Gamma$ such that $W_{\Gamma, T}$ is \emph{large} so that $W_{\Gamma, T}^{-1}$ is \emph{small} in some appropriate metric. Early work (see reference~\cite{muller1972}) has shown that $\Tr(W_{\Gamma, T}^{-1})$ represents the \emph{average} energy over random target states. The following relation,
\[
\frac{\Tr(W_{\Gamma, T}^{-1})}{n} = \sum_{i=1}^n \lambda_i^{-1}(W_{\Gamma, T}) = \sum_{i=1}^n \frac{1}{\lambda_i(W_{\Gamma, T}) } \leq \frac{n}{\subscr{\lambda}{min}(W_{\Gamma, T})},
\]
motivates the use of $\subscr{\lambda}{min}(\cdot)$ as an alternative metric considered in this paper. In other words, maximizing $\subscr{\lambda}{min}(\cdot)$ implies minimizing an upper bound on $\Tr(W_{\Gamma, T}^{-1})$. In this paper, we will focus on analyzing the three metrics, $\subscr{\lambda}{max}(\cdot), \Tr(\cdot)$ and $\subscr{\lambda}{min}(\cdot)$. 

\subsection*{Problem Statement}
This paper considers a formulation in which the sensor set $S$ is chosen so that the resulting subsystem $(A, \Gamma(S)C)$ and the original system $(A,C)$ have approximately similar observability Gramians. The primary goal of this paper is to design an algorithm to select $S$, along with a theoretical characterization of the number of samples $c$ required to solve the feasibility problem,
\begin{align*}
&\min_{S \in \{1,\dots, m\}^c} 0 \\
&\text{subject to } \mathcal{M}(W_{\Gamma(S)}) \geq (1-\epsilon) \mathcal{M}(W_{I}),
\end{align*}
where $\epsilon \in (0,1)$ is given and $\mathcal{M}$ denotes one of the metrics listed in the previous subsection, viz.~$\subscr{\lambda}{max}(\cdot)$, $\Tr(\cdot)$ or $\subscr{\lambda}{min}(\cdot)$, and $I$ is the $m\times m$ identity matrix.

While a greedy algorithm is simple solution for the trace metric due to linearity of trace, the problem is difficult for the other metrics. This paper proposes a randomized approach to select the set $S$. Thus, the sensor selection matrix $\Gamma$ and therefore, $W_{\Gamma}$ are random variables. So we will solve the following modified probabilistic feasibility problem,
\begin{align}\label{eq:metric}
&\min_{S \in \{1,\dots, m\}^c} 0 \nonumber \\
&\text{subject to } \Prob\Big(\mathcal{M}(W_{\Gamma(S)}) \geq (1-\epsilon) \mathcal{M}(W_{I}) \Big) \geq 1-\delta,
\end{align}
where $\delta >0$ is a specified parameter. 

In general, we are interested in the approximating the Gramian so that it is sufficiently close to the original value in the positive definite sense. Formally, this can be expressed as the following feasibility problem.
\begin{align*}
&\min_{S \in \{1,\dots, m\}^c} 0 \\
&\text{subject to }  (1-\epsilon) W_I \preceq W_{\Gamma(S)} \preceq (1+\epsilon) W_{I},
\end{align*}
for a given $\epsilon \in (0,1)$ where $I \in \real^{m\times m}$ is the identity and given any two positive semidefinite matrices $A, B$, $A \preceq B$ denotes that the matrix $B-A$ is positive semidefinite.

Given the probabilistic setting considered in this paper, a second goal of this paper is to solve the modified probabilistic feasibility problem given by
\begin{align}\label{prob}
&\min_{S \in \{1,\dots, m\}^c} 0 \nonumber \\
&\text{subject to } \Prob \Big((1-\epsilon) W_I \preceq W_{\Gamma(S)} \preceq (1+\epsilon) W_{I} \Big) \geq 1-\delta.
\end{align}
Note that a solution to Problem~\ref{prob} also yields a solution to Problem~\ref{eq:metric} for any specific metric out of the set considered. 

A third goal of this paper is to apply the solution to Problem~\ref{prob} to least squares estimation of the state of the system using a least squares observer and characterize the performance in terms of the initial error covariance. In what follows, we will present the solutions to these three problems sequentially.

%

\section{Randomized Algorithm and Main Results} \label{sec:algorithm}
In this section, we will propose a randomized algorithm for sensor selection and then analyze its performance from an accuracy and computational complexity perspective. \sdb{Since $T$ is fixed, we will drop the explicit $T$ dependence in the notation used for the Gramian $W$.

\subsection{The Algorithm}
We begin with \eqref{eq:gramian} for the case with $\Gamma = I_{m\times m}$. We drop this subscript for the sake of convenience.}
\begin{align*}
W &=  \sum_{t = 0}^{T-1} (A')^tC'C A^t \\ &= \sum_{t=0}^{T-1} (A')^t (\sum_{k=1}^m c_k'c_k ) A^t \\ &= \sum_{k=1}^m \sum_{t=0}^{T-1} (A')^t c_k'c_k A^t =: \sum_{k=1}^m W_k,
\end{align*}
where for each $k \in \{1,\dots, m\}$, 
\begin{equation}\label{eq:xi}
W_k = \sum_{t=0}^{T-1} (A')^t c_k'c_k A^t  \in \real^{n \times n},
\end{equation}
and $c_k$ denotes the $k$-th row of $C$. Then, Algorithm~\ref{algo:randsensor} summarizes our randomized approach to sample the sensors. \sdb{Note that step 6 involves sampling with replacement and therefore, the number of unique sensors chosen may be less than the number sampled $c$ and is certainly less than the total number of sensors $m$.}

\begin{algorithm}[h]
\begin{algorithmic}[1]
\STATE \textbf{Input:} Matrices $W_i , \forall i \in \{1,\dots, m\}$ defined in \eqref{eq:xi}, number of samples of sensors $c$.
\FOR{$i = 1, 2, \dots, m$}{
\STATE Choose a value for $p_i$ (to be set in the statements of Theorems~\ref{thm:randsensor2}~\ref{thm:randsensor3} and~\ref{thm:randsensor}).}
\ENDFOR
\STATE Set, $p_i = p_i/\sum_{i=1}^m p_i, \forall i \in \{1,\dots, m\}$.
\STATE Sample $c$ columns $j_1, \dots, j_c$ out of $m$ as per probability $p_1, \dots, p_m$ and \emph{with replacement}.
\STATE \textbf{Output: } Approximate Gramian, \\
\qquad \qquad $\displaystyle G := \frac{1}{c}\sum_{i=1}^c \frac{1}{p_{j_i}}W_{j_i}$
\end{algorithmic}
\caption{Randomized Sensor Selection} \label{algo:randsensor}
\end{algorithm}  

Since Algorithm~\ref{algo:randsensor} involves sampling with replacement, it is anticipated that one or more sensors get chosen multiple number of times. Therefore, although the number of samples $c$ may be large, the number of \emph{unique} sensors selected will typically be smaller, compared to the number of sensors $m$. Since the number of unique sensors is a random variable, the following result characterizes its expected value.

\begin{proposition}[Number of unique sensors]\label{thm:unique}
The expected number of unique sensors selected as a result of Algorithm~\ref{algo:randsensor} is 
\[
m - \sum_{i=1}^m\big( 1- p_i\big)^c.
\]
In the special case when $p_i = 1/m, \forall i \in \{1,\dots, m\}$, the expected number of unique sensors tends to $m(1-\e^{-m/c})$ in the limit of large $m$.
\end{proposition}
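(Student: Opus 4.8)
The plan is to express the number of unique sensors as a sum of indicator random variables and apply linearity of expectation, which avoids having to deal directly with the dependence among the $c$ draws. Let $j_1,\dots,j_c$ be the indices sampled in step~6 of Algorithm~\ref{algo:randsensor}, and for each sensor $i \in \{1,\dots,m\}$ let $X_i$ be the indicator of the event that sensor $i$ is selected at least once, i.e.\ that $i \in \{j_1,\dots,j_c\}$. The number of unique sensors is $N := \sum_{i=1}^m X_i$, so $\E[N] = \sum_{i=1}^m \Prob(X_i = 1)$, and it remains only to evaluate each term.

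First I would compute the per-sensor selection probability. After the normalization in step~5 we have $\sum_{i=1}^m p_i = 1$, and the $c$ samples are drawn independently, each equal to $i$ with probability $p_i$. Hence the probability that sensor $i$ is missed in a single draw is $1-p_i$, and by independence the probability that it is missed in all $c$ draws is $(1-p_i)^c$; therefore $\Prob(X_i = 1) = 1-(1-p_i)^c$. Substituting back, $\E[N] = \sum_{i=1}^m\big(1-(1-p_i)^c\big) = m - \sum_{i=1}^m (1-p_i)^c$, which is the claimed general formula.

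For the uniform case $p_i = 1/m$ this reduces to $\E[N] = m\big(1-(1-1/m)^c\big)$, and the asymptotic statement follows from the elementary limit $(1-1/m)^m \to \e^{-1}$ as $m\to\infty$, equivalently from $\ln(1-1/m) = -1/m + O(1/m^2)$, which gives $(1-1/m)^c \to \e^{-c/m}$ and hence $\E[N] \to m\big(1-\e^{-c/m}\big)$ for large $m$. There is no substantive obstacle in this argument: the one point that needs care is invoking the \emph{independence} of the $c$ draws, which is exactly what sampling with replacement buys us and is what lets the ``missed in all draws'' probability factor as $(1-p_i)^c$; the remainder is linearity of expectation and a standard exponential limit.
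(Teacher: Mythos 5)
Your proof is correct and follows essentially the same route as the paper: indicator variables for each sensor being selected at least once, linearity of expectation, independence of the draws with replacement giving $(1-p_i)^c$, and the standard limit $(1-1/m)^c \to \e^{-c/m}$ in the uniform case. Note that your limit $m(1-\e^{-c/m})$ agrees with what the paper's own proof derives, so the exponent $\e^{-m/c}$ appearing in the proposition statement is evidently a typo rather than a discrepancy in your argument.
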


\subsection{Main results: Sample Complexity Bounds}
Note that the approximate Gramian $G$ is a random variable. Theorem~\ref{thm:randsensor2} characterizes how accurately $G$ approximates the Gramian of the entire system $W$ using $\subscr{\lambda}{max}(\cdot)$ as metric, in a probabilistic sense. 

\begin{theorem}[Maximum eigenvalue as metric]\label{thm:randsensor2}
Let $\epsilon \in (0,1)$ and $\delta \in (0,1)$ and let $G$ denote the output of Algorithm~\ref{algo:randsensor} for the system~\eqref{eq:model}. With the choice of $p_i:= \subscr{\lambda}{max}(W_i)$ and 
\[
c \geq 2.7\frac{\sum_{k=1}^m\subscr{\lambda}{max}(W_k)}{\epsilon^2\subscr{\lambda}{max}(W)}\log\Big (\frac{n}{\delta}\Big ),
\]
we have 
\[
\Prob\big(\subscr{\lambda}{max}(G) \geq (1-\epsilon)\subscr{\lambda}{max}(W) \big) \geq 1-\delta.
\]
\end{theorem}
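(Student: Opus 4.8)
The plan is to prove the stronger statement that $W-G$ is small in spectral norm with probability at least $1-\delta$, and then read off the claim by testing the quadratic form against a top eigenvector of $W$. The natural tool is the (one-sided) matrix Bernstein inequality, since $G=\sum_{i=1}^{c}X_i$ with $X_i:=\tfrac{1}{c\,p_{j_i}}W_{j_i}$ is a sum of $c$ i.i.d.\ random positive semidefinite matrices. Unbiasedness is immediate: $\E[X_i]=\tfrac1c\sum_{k=1}^m p_k\cdot\tfrac1{p_k}W_k=\tfrac1c W$, so $\E[G]=W$; write $Y_i:=X_i-\tfrac1c W$ for the centered summands, so that $W-G=\sum_i(-Y_i)$.

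Next I would establish the two quantities Bernstein needs. With the prescribed weights $p_k=\lambda_{\max}(W_k)/L$, $L:=\sum_{\ell=1}^m\lambda_{\max}(W_\ell)$, each matrix $\tfrac1{p_k}W_k$ has spectral norm exactly $L$; since each $W_k\succeq 0$ we also have $\lambda_{\max}(W)=\lambda_{\max}(\sum_k W_k)\le L$, hence (using $\|A-B\|\le\max(\|A\|,\|B\|)$ for positive semidefinite $A,B$) the uniform bound $\|Y_i\|\le R:=L/c$ almost surely. For the variance proxy, $\sum_i\E[Y_i^2]=\tfrac1c\sum_k\tfrac1{p_k}W_k^2-\tfrac1c W^2\preceq\tfrac1c\sum_k\tfrac1{p_k}W_k^2$, and the operator inequality $W_k^2\preceq\lambda_{\max}(W_k)W_k$ together with $\lambda_{\max}(W_k)/p_k=L$ gives $\sum_k\tfrac1{p_k}W_k^2\preceq L\sum_k W_k=LW$; therefore $\bigl\|\sum_i\E[Y_i^2]\bigr\|\le v^2:=L\,\lambda_{\max}(W)/c$.

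Then I would apply matrix Bernstein to $W-G$:
\[
\Prob\!\big(\lambda_{\max}(W-G)\ge t\big)\le n\exp\!\Big(\tfrac{-t^2/2}{v^2+Rt/3}\Big),
\]
take $t=\epsilon\,\lambda_{\max}(W)$, and use $\epsilon<1$ to bound the denominator by $v^2+Rt/3=\tfrac{L\lambda_{\max}(W)}{c}\big(1+\tfrac{\epsilon}{3}\big)<\tfrac43\cdot\tfrac{L\lambda_{\max}(W)}{c}$, so the right-hand side is at most $n\exp\!\big(-\tfrac38\cdot\tfrac{c\,\epsilon^2\lambda_{\max}(W)}{L}\big)$. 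Forcing this to be $\le\delta$ yields $c\ge\tfrac83\cdot\tfrac{L}{\epsilon^2\lambda_{\max}(W)}\log(n/\delta)$, which the stated hypothesis implies since $8/3<2.7$. Finally, on the complementary event $\lambda_{\max}(W-G)<\epsilon\,\lambda_{\max}(W)$, taking a unit eigenvector $u$ associated with $\lambda_{\max}(W)$ gives $u^\top G u\ge u^\top W u-\lambda_{\max}(W-G)>(1-\epsilon)\lambda_{\max}(W)$, whence $\lambda_{\max}(G)\ge u^\top G u>(1-\epsilon)\lambda_{\max}(W)$, completing the argument.

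The step I expect to be the crux is the variance bound: one must discard the $-W^2$ term, invoke $W_k^2\preceq\lambda_{\max}(W_k)W_k$, and—above all—notice that the choice $p_k\propto\lambda_{\max}(W_k)$ is exactly what makes $\lambda_{\max}(W_k)/p_k$ collapse to the single constant $L$, so that both $R$ and $v^2$ depend only on $L$ and $\lambda_{\max}(W)$ rather than on $\min_k\lambda_{\max}(W_k)$ or on $\lambda_{\min}(W)$. The remaining ingredients—using the one-sided form of matrix Bernstein (dimension factor $n$ rather than $2n$) and reducing the eigenvalue statement to a bound on $\lambda_{\max}(W-G)$ via the top eigenvector of $W$—are routine, and the same template, with $p_k$ re-chosen, should carry over to the trace and minimum-eigenvalue metrics treated later.
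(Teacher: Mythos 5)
Your proposal is correct and follows essentially the same route as the paper's proof: the same centered i.i.d.\ decomposition of $G-W$, the same uniform bound $L/c$ and variance proxy $L\,\subscr{\lambda}{max}(W)/c$ with $L=\sum_{k=1}^m\subscr{\lambda}{max}(W_k)$, and the same application of the matrix Bernstein inequality with dimension factor $n$. The only differences are cosmetic: you bound the variance term by discarding $-W^2$ and invoking $W_k^2\preceq\subscr{\lambda}{max}(W_k)W_k$, where the paper instead factors the sum as $\tfrac{1}{c}X(L-X'X)X'$ and applies the Zhan difference bound, and you extract the constant ($8/3<2.7$) by substituting $t=\epsilon\,\subscr{\lambda}{max}(W)$ directly rather than solving for the deviation level as the paper does.
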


\sdb{In this result, $\epsilon$ is the accuracy parameter while $\delta$ controls the probability with which the bound holds. Clearly, if both $\epsilon, \delta$ are close to zero, then the number of samples $c$ required by this result increases.} The number of samples $c$ depends greatly on the numerical value of  $\sum_{k=1}^m\subscr{\lambda}{max}(W_k)/\subscr{\lambda}{max}(W)$. \sdb{Numerical studies in Section~\ref{sec:numerics} will shed further light on how practical this bound is.}

The requirement on the sample complexity can be made extremely sharp for the choice of the trace of the Gramian as the metric in \eqref{eq:metric}. The following result provides a guarantee on the trace of the Gramian.

\begin{theorem}[Trace of the approximate Gramian] \label{thm:randsensor3}
Let $G$ denote the output of Algorithm~\ref{algo:randsensor} for the system~\eqref{eq:model} and let $\epsilon \in (0,1)$ be given. With the choice of $p_i := \Tr(W_i)$ and for \emph{any} $c \in \{1,\dots, m\}$, we have
\[
\Prob\big(\Tr(G) \geq (1-\epsilon)\Tr(W)\big) = 1.
\]
\end{theorem}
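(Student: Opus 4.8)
The plan is to observe that, for this particular sampling distribution, the output $G$ has \emph{deterministic} trace equal to $\Tr(W)$, so the claimed inequality holds trivially — in fact with $\epsilon$ replaced by $0$. The argument is a one-line computation using linearity of the trace, once the normalization in Algorithm~\ref{algo:randsensor} is unpacked.

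First I would make the sampling probabilities explicit. Starting from $p_i = \Tr(W_i)$ and applying the renormalization in step~5, and using $W = \sum_{k=1}^m W_k$ together with linearity of the trace, we get $\sum_{k=1}^m \Tr(W_k) = \Tr(W)$, hence the probabilities actually used in step~6 are $p_i = \Tr(W_i)/\Tr(W)$. Here I would note that any sensor with $\Tr(W_i) = 0$ — equivalently $W_i = 0$, since $W_i \succeq 0$ by \eqref{eq:xi} — has $p_i = 0$ and is therefore never selected, so the reciprocals $1/p_{j_i}$ appearing in the definition of $G$ are always well defined. Then I would compute directly: $\Tr(G) = \frac{1}{c}\sum_{i=1}^c \frac{1}{p_{j_i}}\Tr(W_{j_i})$, and substituting $p_{j_i} = \Tr(W_{j_i})/\Tr(W)$ makes every summand equal to $\Tr(W)$, so $\Tr(G) = \Tr(W)$ for every realized sample $j_1,\dots,j_c$ and every $c \in \{1,\dots,m\}$. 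Since $\epsilon \in (0,1)$ this yields $\Tr(G) = \Tr(W) \geq (1-\epsilon)\Tr(W)$ almost surely, which is the claim.

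There is essentially no obstacle: the only point is that sampling proportional to $\Tr(W_i)$ turns the importance-sampling estimator $\frac{1}{c}\sum_i \frac{1}{p_{j_i}} W_{j_i}$ into an exactly trace-preserving map, because the reweighting factor $1/p_{j_i}$ cancels $\Tr(W_{j_i})$ identically rather than merely in expectation. The only mild technicality is the degenerate case $\Tr(W_i) = 0$, which is dispatched by the remark above; everything else is just linearity of the trace.
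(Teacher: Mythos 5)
Your proof is correct, and it takes a genuinely more direct route than the paper's. The paper never makes the pointwise observation you do; instead it computes $\E[\Tr(W-G)]=0$, then carries out a fairly long calculation (expanding $\E[\Tr^2(W-G)]$ into double sums over diagonal entries, using independence of the $c$ draws, and substituting $p_k = \Tr(W_k)/\sum_k \Tr(W_k)$) to show that the second moment of $\Tr(W-G)$ is also zero, and finally invokes Cantelli's inequality to convert zero mean and zero variance into the probability-one statement. Your argument shows the stronger fact directly: for this sampling distribution the reweighting factor $1/p_{j_i}$ cancels $\Tr(W_{j_i})$ identically, so $\Tr(G)=\Tr(W)$ for \emph{every} realization of the sample, which immediately yields the claim (even with $\epsilon=0$) and renders both the variance computation and the concentration inequality unnecessary. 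Your handling of the degenerate case $\Tr(W_i)=0$ (such sensors have sampling probability zero, so $1/p_{j_i}$ is always well defined on the support) is a small point the paper glosses over, and it is dispatched correctly. What the paper's moment-based machinery would buy is robustness to situations where the trace is preserved only in expectation with small but nonzero variance; in the present setting, where the estimator is exactly trace-preserving, your one-line computation is the cleaner and more informative proof.
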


This result is intuitive in the sense that due to the linearity property of the trace operator, it is straightforward to see that $\E[\Tr(W-G)] = \Tr(\E[W-G]) = 0,$ with the particular choice of $p_i$. It takes additional work to show that the variance of $\Tr(W-G)$ also equals zero. 

Theorems~\ref{thm:randsensor2} and \ref{thm:randsensor3} apply only to the respective specific metrics and do not provide any guarantees on whether the resulting approximate Gramian $G$ is non-singular with high probability, unless there is additional structure in the system, such as observability through each individual sensor. For guaranteed non-singularity of $G$, we need to construct a different sampling distribution described as follows.

For each sensor $k \in \{1,\dots, m\}$, we can associate a positive scalar $\gamma_k \in [0,1]$, which quantifies the amount by which the Gramian of the total system $W$ needs to scaled so that the following inequality holds.
\begin{equation}\label{eq:gamma}
\gamma_k := \min\{ \gamma \in \real \, : \, W_k \leq \gamma W \}.
\end{equation}

If $W$ is invertible, then $\gamma_k>0$ and is given by
\[
\gamma_k = \subscr{\lambda}{max}(W^{-1} W_k).
\]

Then, the following guarantees hold on the output of Algorithm~\ref{algo:randsensor}.

\sdb{\begin{theorem}[Approximate Gramian accuracy]\label{thm:randsensor}
Let $\epsilon \in (0,1)$, $\delta \in (0,1]$ and let $G$ denote the output of Algorithm~\ref{algo:randsensor} for the system~\eqref{eq:model}. Then, with the choice of $p_i = \subscr{\lambda}{max}(W^{-1} W_i)$ and 
\[
c \geq \frac{4\sum_{k=1}^m \gamma_k}{\epsilon^2}\log\frac{2n}{\delta},
\]
we have 
\[
\Prob\Big( (1-\epsilon)W \preceq G \preceq  (1+\epsilon)W \Big ) \geq 1-\delta.
\]
\end{theorem}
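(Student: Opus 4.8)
\noindent\emph{Proof sketch.} The plan is to recast the claim as a matrix Chernoff bound after whitening the Gramian. The choice $p_i = \subscr{\lambda}{max}(W^{-1}W_i)$ only makes sense when $W$ is invertible, so throughout we assume the original system is observable, i.e.\ $W \succ 0$; write $\bar\gamma := \sum_{k=1}^m \gamma_k$, so that after the renormalization in Algorithm~\ref{algo:randsensor} we have $p_i = \gamma_i/\bar\gamma$. I would first record two elementary facts: (a) $\gamma_k = \subscr{\lambda}{max}(W^{-1/2}W_k W^{-1/2})$, since $W^{-1}W_k$ and $W^{-1/2}W_k W^{-1/2}$ are similar and the latter is positive semidefinite, so $\gamma_k$ as in \eqref{eq:gamma} is well defined and nonnegative; and (b) $\bar\gamma \geq 1$, because $W = \sum_k W_k \preceq \bar\gamma W$ forces $(\bar\gamma - 1)W \succeq 0$.

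Next I would change variables. Set $\tilde W_k := W^{-1/2}W_k W^{-1/2} \succeq 0$; these satisfy $\sum_{k=1}^m \tilde W_k = I_n$ and $\subscr{\lambda}{max}(\tilde W_k) = \gamma_k$. Conjugating the output of Algorithm~\ref{algo:randsensor} by $W^{-1/2}$ gives $\tilde G := W^{-1/2} G W^{-1/2} = \frac1c \sum_{i=1}^c p_{j_i}^{-1}\tilde W_{j_i}$, and since congruence by the invertible matrix $W^{-1/2}$ preserves the semidefinite order, the event $\{(1-\epsilon)W \preceq G \preceq (1+\epsilon)W\}$ coincides with $\{(1-\epsilon)I \preceq \tilde G \preceq (1+\epsilon)I\}$. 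Write $\tilde G = \sum_{i=1}^c X_i$ with $X_i := \frac1c p_{j_i}^{-1}\tilde W_{j_i}$, a sum of i.i.d.\ positive semidefinite matrices. A one-line computation gives $\E[X_i] = \frac1c\sum_{k=1}^m p_k\, p_k^{-1}\tilde W_k = \frac1c I$, so $\sum_{i=1}^c \E[X_i] = I$ and the relevant eigenvalue means are $\subscr{\mu}{min} = \subscr{\mu}{max} = 1$. The step that really uses the designed distribution is the deterministic per-sample bound $X_i \preceq \frac1c p_{j_i}^{-1}\subscr{\lambda}{max}(\tilde W_{j_i})\, I = \frac1c\,\frac{\gamma_{j_i}}{p_{j_i}}\, I = \frac{\bar\gamma}{c}\, I$, where the constancy of $\gamma_{j_i}/p_{j_i} = \bar\gamma$ is exactly the point of taking $p_i \propto \gamma_i$; any other choice would leave a data- or index-dependent blow-up here.

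With $R := \bar\gamma/c$ as the uniform per-term norm bound and $\subscr{\mu}{min} = \subscr{\mu}{max} = 1$, I would apply the matrix Chernoff inequality for sums of i.i.d.\ positive semidefinite matrices (see \cite{tropp:11b}) to the two tails separately: for $\epsilon \in (0,1)$, $\Prob(\subscr{\lambda}{min}(\tilde G) \leq 1-\epsilon) \leq n\big[e^{-\epsilon}(1-\epsilon)^{-(1-\epsilon)}\big]^{1/R}$ and $\Prob(\subscr{\lambda}{max}(\tilde G) \geq 1+\epsilon) \leq n\big[e^{\epsilon}(1+\epsilon)^{-(1+\epsilon)}\big]^{1/R}$. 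Using the standard scalar estimates $e^{-\epsilon}(1-\epsilon)^{-(1-\epsilon)} \leq e^{-\epsilon^2/2}$ and $e^{\epsilon}(1+\epsilon)^{-(1+\epsilon)} \leq e^{-\epsilon^2/3}$ on $(0,1)$, a union bound gives
\[
\Prob\big( \norm{\tilde G - I} > \epsilon \big) \;\leq\; 2n\, e^{-c\epsilon^2/(4\bar\gamma)},
\]
and setting the right-hand side to at most $\delta$ yields precisely $c \geq \frac{4\bar\gamma}{\epsilon^2}\log\frac{2n}{\delta}$; translating back through $\tilde G = W^{-1/2}GW^{-1/2}$ gives the stated guarantee on $G$, and the case $\delta = 1$ is vacuous.

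I do not anticipate a genuine difficulty here — this is an application of a known concentration inequality — but the points that need care are: verifying fact (a) so the $p_i$ are legitimate probabilities (with the harmless observation that a totally uninformative sensor, $W_k = 0$, simply gets $p_k = 0$ and is never drawn); obtaining the clean per-sample bound $X_i \preceq (\bar\gamma/c)I$, which is the whole reason the constant multiplying $\bar\gamma$ in the sample bound is distribution-independent; and the bookkeeping of the two Chernoff exponents so that the quoted factor $4$ (a safe, slightly non-tight choice — $3$ in fact suffices with the estimates above) is honestly justified.
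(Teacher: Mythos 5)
Your proposal is correct and follows essentially the same route as the paper: both reduce the claim to an i.i.d.\ matrix Chernoff/Ahlswede--Winter concentration bound for the samples $W_{j_i}/p_{j_i}$, using the expectation $\E[W_{j_i}/p_{j_i}] = W$ and the per-sample relative bound with $\rho = \sum_{k=1}^m \gamma_k$, and then solving $2n\e^{-c\epsilon^2/(4\rho)} \leq \delta$ for $c$. The only difference is packaging: the paper invokes the two-sided Ahlswede--Winter inequality directly (its hypothesis $Z \preceq \rho\,\E[Z]$, $\rho \geq 1$ is exactly your per-sample bound, with $\rho \geq 1$ shown by the same summation argument you use for $\bar\gamma \geq 1$), whereas you re-derive that statement by whitening with $W^{-1/2}$ and combining Tropp's two Chernoff tails, which additionally makes explicit that the constant $4$ is slightly loose.
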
}


\begin{remark} \label{rem:lmax} It can be easily shown that the sample complexity bound in Theorem~\ref{thm:randsensor2} is lower than that in Theorem~\ref{thm:randsensor} as follows. Given the constants, it only remains to be checked that
\[
\sum_{k=1}^m \subscr{\lambda}{max}(W^{-1} W_k) \geq \frac{\sum_{k=1}^m\subscr{\lambda}{max}(W_k)}{\subscr{\lambda}{max}(W)}.
\]
Using the sub-multiplicativity of $\subscr{\lambda}{max}(\cdot)$, we conclude that 
\begin{align*}
\sum_{k=1}^m \subscr{\lambda}{max}(W^{-1} W_k) \subscr{\lambda}{max}(W) &\geq \sum_{k=1}^m \subscr{\lambda}{max}(W^{-1} W W_k) \\
&= \sum_{k=1}^m \subscr{\lambda}{max}(W_k).
\end{align*}
Although the sample complexity requirement of Theorem~\ref{thm:randsensor} is higher, its strength lies in the fact that it can be used to perform sensor selection using \emph{any} of the eigenvalue based worst case metrics such as $\subscr{\lambda}{min}(\cdot)$ or the trace of the inverse of the Gramian.
\end{remark}

In terms of computational complexity, Algorithm~\ref{algo:randsensor} requires one pass through all of the $m$ sensors to compute the sampling probabilities. Thereafter, the computation of $G$ is $O(cTn^2)$. The bounds in Theorems~\ref{thm:randsensor} and \ref{thm:randsensor2} depend on the properties of the sensors but the \emph{number of sensors $m$ does not explicitly appear.} It now remains to be seen how Algorithm~\ref{algo:randsensor} can be used to construct the corresponding approximate linear system, how practical the sample complexity bounds turn out to be, and how Algorithm~\ref{algo:randsensor} fares against simple greedy algorithms. This will be the  focus of the next two sections.

\section{Applications: Least-squares Estimation}\label{sec:estimation}
We will now discuss the application of the proposed approach to the problem of recursive least-squares estimation. For the system~\eqref{eq:model} with $Q = 0$ (i.e., no process noise), it is well-known (see reference~\cite{boyd}) that the least squares estimate of the initial state is given by
\[
\subscr{\hat{x}}{LS} = W^{-1} O' \begin{bmatrix} y_0 \\ y_1 \\ \vdots \\ y_T \end{bmatrix},
\]
where 
\[
O := \begin{bmatrix}  C \\  CA \\ \vdots \\ CA^{T-1} \end{bmatrix} \in \real^{mT\times n},
\]
is the \emph{observability matrix}. The estimation error $e := \subscr{\hat{x}}{LS} - x_0$ is given by the random variable $\mathcal{N}(0, \Sigma)$, where
\[
\Sigma = W^{-1} O' \R OW^{-1},
\] 
and $\R \in \real^{mT \times mT}$ is the block diagonal matrix with the matrix $R$ along the block diagonals, i.e.,
\[
\R := \blkdiag(R, R, \dots, R).
\]

Algorithm~\ref{algo:randsensor} and Theorem~\ref{thm:randsensor} can be used to construct an approximate system $A, \bar{C}$, such that the observability Gramians of $(A, C)$ and $(A,\bar{C})$ solve Problem~\ref{prob}.  The measurement vector $y_t$ and the noise covariance $R$ also need to be modified to $\bar{y}_t$ and $\bar{R}$ to be compatible with the approximate system $A, \bar{C}$. 

Specifically, suppose that Algorithm~\ref{algo:randsensor} returns a sensor selection in which row $i$ has been selected $n_i$ times, for every $i\in \{1,\dots, m\}$. Let the number of unique sensors be $q$. Then,
\begin{enumerate}
\item Set the $i$-th row $\bar{c}_i$ of $\bar{C}$ using the following rule:
\[
\bar{c}_{i} = \sqrt{\frac{n_i}{cp_i}}c_{i}, \forall i \in \{1,\dots, m\}.
\] 
\item Set the $i$-th entry $\bar{y}_t(i)$ of $\bar{y}$ using the rule:
\[
\bar{y}_{t}(i) = \sqrt{\frac{n_i}{cp_i}}y_{t}(i), \forall i \in \{1,\dots, m\}.
\] 
\item Set the $i,j$-th entry $\bar{R}_{ij}$ of $\bar{R}$ using the rule:
\[
\bar{R}_{ij} = \sqrt{\frac{n_in_j}{cp_ip_j}}R_{ij},  \forall (i,j) \in \{1,\dots, m\}\times\{1,\dots, m\}.
\] 
\item Remove the rows that contain all zeros out of $\bar{C}$ and $\bar{y}$ and the rows and columns that contain all zeros out of $\bar{R}$.
\end{enumerate}
This procedure is illustrated in Figure~\ref{fig:application}. 

\begin{figure}[!h]
\centering
\includegraphics[width=0.6\columnwidth]{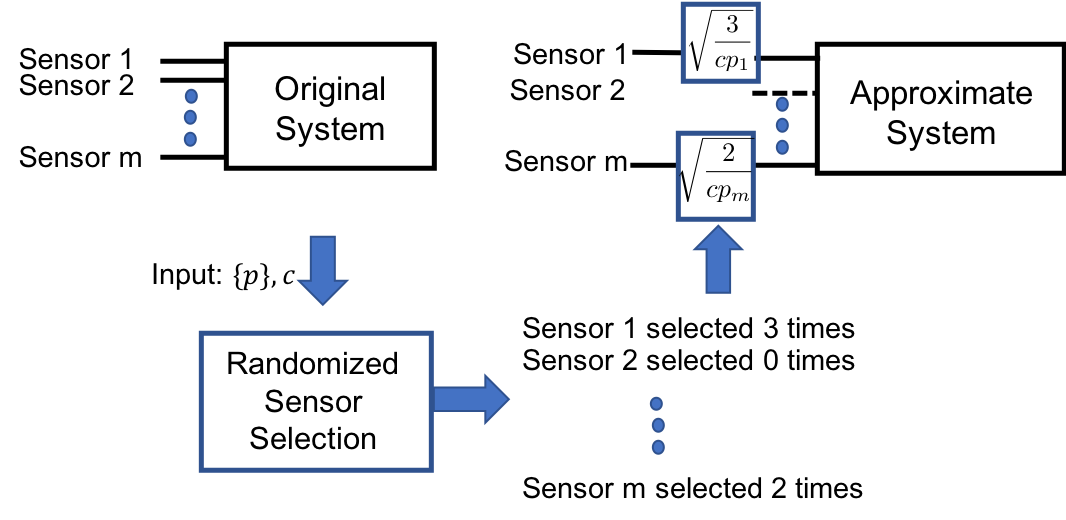}
\caption{Illustration of how Algorithm~\ref{algo:randsensor} can be used to synthesize the approximate system. The output of each sensor will simply be scaled by the factor in line 7 of  Algorithm~\ref{algo:randsensor}. If a sensor does not get chosen (e.g., sensor 2 in this figure), then the output of that sensor is not used in the modified system.}
\label{fig:application}
\end{figure}

These relations can be written compactly as $\bar{y} = \Pi y, \bar{C} = \Pi C, \bar{R} = \Pi R \Pi'$, where the matrix $\Pi\in \real^{q\times m}$ is full row rank and exactly one entry in every row is non-zero and is given by the term $\sqrt{\frac{n_j}{cp_j}}$, corresponding to the $j$-th sensor getting selected. Thus, if $i_1, i_2, \dots, i_q$ are the indices corresponding to $q$ distinct sensors that get selected, we have
\[
\Pi \,\Pi' = \diag \Big (\frac{n_{i_1}}{cp_{i_1}},\dots, \frac{n_{i_q}}{cp_{i_q}} \Big ) \in \real^{q\times q}.
\]

It is straightforward to verify that the observability Gramian of $A,\bar{C}$ is the $T$-step observability Gramian $G$ from  Algorithm~\ref{algo:randsensor}, and the observability matrix $\bar{O} = {\bf \Pi} O$, where
\[
{\bf \Pi} := \blkdiag(\Pi, \Pi, \dots, \Pi)  \in \real^{qT \times mT}.
\]
In particular, we have
\[
\bar{y}_t = \bar{C}x_t + \bar{v}_t,
\]
where $\bar{v}_t \sim \mathcal{N}(0, \bar{R})$. Analogous to $\R$, we can then define the block diagonal matrix, $\bar{\R} := {\bf \Pi} \R {\bf \Pi}'$. From Theorem~\ref{thm:randsensor}, we conclude that the system $A,\bar{C}$ is observable with probability at least $1-\delta$ and the minimum error variance $\bar{\sigma}$ satisfies
\begin{align*}
\bar{\Sigma} &= G^{-1} \bar{O}'\bar{\R}\bar{O}G^{-1} \\
&\preceq \frac{1}{(1-\epsilon)^2} W^{-1} O' \,{\bf \Pi}' {\bf \Pi} \, \R \, {\bf \Pi}' {\bf \Pi} \, O W^{-1} \\
&\preceq \frac{\subscr{\lambda}{max}(\Pi' \Pi R \Pi' \Pi)}{(1-\epsilon)^2} W^{-1} O' OW^{-1}\\
 &= \frac{\subscr{\lambda}{max}(\Pi' \Pi R \Pi' \Pi)}{(1-\epsilon)^2} W^{-1}.
\end{align*}

In particular, if $R$ is diagonal, then the above bound simplifies to 
\[
\bar{\Sigma} \leq \frac{\max_{i}\{\frac{n_iR_{ii}}{cp_i}\}}{(1-\epsilon)^2} W^{-1}.
\]

The following result summarizes the steps in this sub-section into a formal guarantee.

\begin{proposition}\label{prop:estimation}
For the approximate system $(A, \bar{C})$ constructed using Algorithm~\ref{algo:randsensor} with the modified measurements $\bar{y}$ and the covariance $\bar{R}$, the recursive least squares estimation error covariance satisfies
\[
\bar{\Sigma} \preceq \frac{\subscr{\lambda}{max}(\Pi' \Pi R \Pi' \Pi)}{(1-\epsilon)^2} W^{-1},
\]
with probability at least $1-\delta$.
\end{proposition}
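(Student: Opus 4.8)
The plan is to chain the construction of the approximate system given above with the high-probability Gramian sandwich of Theorem~\ref{thm:randsensor}, being careful only at the one place where L\"owner-order manipulations are delicate. First I would invoke Theorem~\ref{thm:randsensor}: since the sampling probabilities $p_i=\subscr{\lambda}{max}(W^{-1}W_i)$ and the sample size $c$ are exactly those required there, on an event $\mathcal{E}$ of probability at least $1-\delta$ we have $(1-\epsilon)W\preceq G\preceq(1+\epsilon)W$; because the original system is observable, $W\succ 0$, so on $\mathcal{E}$ the matrix $G$ is positive definite, $G^{-1}$ exists, and $\bar{\Sigma}$ is well defined, and all subsequent claims are made on $\mathcal{E}$. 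Next I would record the identities already established in this section: $\bar{C}=\Pi C$ and $\bar{R}=\Pi R\Pi'$, whence $\bar{O}={\bf \Pi}O$, the stacked measurement-noise covariance is $\bar{\R}={\bf \Pi}\R{\bf \Pi}'$, and the observability Gramian of $(A,\bar{C})$ equals $\bar{O}'\bar{O}=O'{\bf \Pi}'{\bf \Pi}O=G$. Substituting into the least-squares error-covariance formula $\bar{\Sigma}=G^{-1}\bar{O}'\bar{\R}\bar{O}G^{-1}$ then gives $\bar{\Sigma}=G^{-1}O'\big({\bf \Pi}'{\bf \Pi}\R{\bf \Pi}'{\bf \Pi}\big)OG^{-1}$.

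The core estimate proceeds in two controlled steps. Since ${\bf \Pi}$ and $\R$ are block diagonal with identical blocks, ${\bf \Pi}'{\bf \Pi}\R{\bf \Pi}'{\bf \Pi}=\blkdiag(\Pi'\Pi R\Pi'\Pi,\dots,\Pi'\Pi R\Pi'\Pi)\preceq\subscr{\lambda}{max}(\Pi'\Pi R\Pi'\Pi)\,I$, so $O'\big({\bf \Pi}'{\bf \Pi}\R{\bf \Pi}'{\bf \Pi}\big)O\preceq\subscr{\lambda}{max}(\Pi'\Pi R\Pi'\Pi)\,O'O=\subscr{\lambda}{max}(\Pi'\Pi R\Pi'\Pi)\,W$. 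Conjugating this L\"owner inequality by the symmetric matrix $G^{-1}$ (which preserves $\preceq$) gives $\bar{\Sigma}\preceq\subscr{\lambda}{max}(\Pi'\Pi R\Pi'\Pi)\,G^{-1}WG^{-1}$. Finally, from $G\succeq(1-\epsilon)W$ one gets $W^{-1/2}GW^{-1/2}\succeq(1-\epsilon)I$, hence $(W^{-1/2}GW^{-1/2})^{-1}\preceq\tfrac{1}{1-\epsilon}I$ and $G^{-1}WG^{-1}=W^{-1/2}(W^{-1/2}GW^{-1/2})^{-2}W^{-1/2}\preceq\tfrac{1}{(1-\epsilon)^2}W^{-1}$. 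Combining the last two displays yields $\bar{\Sigma}\preceq\tfrac{\subscr{\lambda}{max}(\Pi'\Pi R\Pi'\Pi)}{(1-\epsilon)^2}W^{-1}$ on $\mathcal{E}$, i.e.\ with probability at least $1-\delta$, which is the claim.

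I expect the main obstacle to be precisely this last maneuver: one cannot directly substitute $G^{-1}\preceq\tfrac{1}{1-\epsilon}W^{-1}$ into the two-sided product $G^{-1}NG^{-1}$ for a general positive semidefinite $N$, since such a substitution need not preserve the L\"owner order. The remedy is the order of operations above --- first collapse the middle matrix to the scalar multiple $\subscr{\lambda}{max}(\Pi'\Pi R\Pi'\Pi)\,W$, and only afterwards handle the product $G^{-1}WG^{-1}$, where $W$ and $W^{-1/2}$ commute and the estimate goes through cleanly. The remaining content --- verifying that ${\bf \Pi}$, $\bar{C}$, $\bar{y}$, $\bar{R}$ are related to $G$ and $\bar{O}$ as claimed, and that $\bar{\Sigma}=G^{-1}\bar{O}'\bar{\R}\bar{O}G^{-1}$ --- is routine and is already spelled out in the surrounding text.
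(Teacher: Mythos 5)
Your proof is correct, and it rests on the same ingredients as the paper's own derivation (the identities $\bar{O}={\bf \Pi}O$, $\bar{\R}={\bf \Pi}\R{\bf \Pi}'$, $G=O'{\bf \Pi}'{\bf \Pi}O$, the scalar bound $\subscr{\lambda}{max}(\Pi'\Pi R\Pi'\Pi)$ on the block-diagonal middle factor, and the Gramian sandwich of Theorem~\ref{thm:randsensor}), but you apply the two estimates in the opposite order, and that reordering is not cosmetic. The paper's first displayed inequality replaces both outer factors $G^{-1}$ by $\tfrac{1}{1-\epsilon}W^{-1}$ while keeping the positive semidefinite middle matrix $\bar{O}'\bar{\R}\bar{O}$ fixed; an implication of the form $A\preceq B \Rightarrow ANA\preceq BNB$ is not a valid L\"owner manipulation in general (it already fails for simple $2\times 2$ examples), so that step, as written, needs exactly the repair you supply. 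Your ordering --- first collapse ${\bf \Pi}'{\bf \Pi}\R{\bf \Pi}'{\bf \Pi}\preceq\subscr{\lambda}{max}(\Pi'\Pi R\Pi'\Pi)\,I$, conjugate by $O$ and then by $G^{-1}$ (both legitimate congruences), and only then bound $G^{-1}WG^{-1}\preceq(1-\epsilon)^{-2}W^{-1}$ via $W^{-1/2}GW^{-1/2}\succeq(1-\epsilon)I$ --- uses only order-preserving operations and reaches the identical final bound, so it is in effect a fully rigorous rendering of the chain the paper sketches; you correctly identified the delicate point and handled it in the way the paper's argument implicitly requires. Your probabilistic bookkeeping (working on the event of Theorem~\ref{thm:randsensor}, on which $W\succ 0$ gives $G\succ 0$ so that $\bar{\Sigma}$ is well defined) also matches the intended reading of the proposition.
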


\begin{remark}[Randomized Actuator Selection]
Given the duality between observability and controllability \cite{summers:16}, one can pose a problem similar to that considered in this paper for actuator selection. Algorithm~\ref{algo:randsensor} easily extends to select actuators out of a given set in order to optimize the same metrics for the controllability Gramian. \sdb{In this case, multiple samples of an actuator will be addressed in a manner analogous to Figure~\ref{fig:application} and the procedure detailed in this section by appropriately scaling the corresponding row of the control matrix $B$.} Claims analogous to Theorems~\ref{thm:randsensor}, \ref{thm:randsensor2} and \ref{thm:randsensor3} and Proposition~\ref{thm:unique} will hold on the number of samples to be drawn.
\end{remark}

\begin{remark}[Kalman Filtering]\label{rem:kf}
The proposed procedure can also be used to design a Kalman filter to estimate the state when $Q \succeq 0$ and $(A, \sqrt{Q})$ is controllable. Theorem~\ref{thm:randsensor} guarantees that the system $(A, \bar{C})$ is observable with probability $1-\delta$. Therefore, the Kalman filter converges to yield the steady state error covariance matrix $\bar{P}$, which satisfies the following equation.
\[
\bar{P} = A\bar{P}A' + Q - A\bar{P}C'\Pi' ( \Pi (C \bar{P} C' + R) \Pi' )^{-1} \Pi C \bar{P} A'.
\]
The steady state error covariance matrix $P$ satisfies the corresponding equation for the entire system, i.e.,
\[
P = APA' + Q - A{P}C' ( C {P} C' + R) )^{-1} C {P} A'.
\]
It now remains to be seen how close the steady state error covariances $\bar{P}$ and $P$ get. This will be the subject of one of the numerical studies in Section~\ref{sec:numerics}.
\end{remark}

\section{Numerical results} \label{sec:numerics}
In this section, we report numerical results of implementing Algorithm~\ref{algo:randsensor} on a class of linear dynamical systems and empirically study the sample complexity bounds from Section~\ref{sec:algorithm}. We also present a numerical comparison between two variants of Algorithm~\ref{algo:randsensor} and a simple greedy heuristic.

\subsection{System model} 
In our experiments, we consider an $n=100$ dimensional state space and $m = 100$ sensors. We assume that the system is in the \emph{observability canonical form} in which the entries in the last column of $A$ are selected independently and uniformly randomly from the interval $[-1,0]$, while the entries in $C$ are chosen independently and uniformly randomly from the interval $[0,1]$. Such a choice guarantees that the pair $(A,c_i), \forall i\in \{1,\dots, m\}$ is observable. This realization was kept fixed throughout the experiments. 

\subsection{Validation of Algorithm~\ref{algo:randsensor}}
We first validated our approach for a fixed value of $\delta = 0.1$ and a range of values of $\epsilon$. We estimated the observability Gramian using Algorithm~\ref{algo:randsensor} after running 100 Monte Carlo trials. Since our algorithm relies on sampling \emph{with replacement}, there are instances when the number of unique sensors selected by our randomized algorithm is much less than the number of samples $c$. In our experiments, we studied two variants -- the first one in which repetitions were allowed leading to possibly multiple instances of a sensor, and the second in which only single instance of the unique sensors was allowed. We evaluated the performance on the three metrics: the minimum eigenvalue, the trace and the maximum eigenvalue of the observability Gramian. 

Figure~\ref{fig:mineig} validates Theorem~\ref{thm:randsensor} (the solid blue line is always above the dashed red curve) and further suggests that while the performance of the second variant may be lower than the bound, beyond a non-trivial value of $\epsilon (\epsilon > 0.6)$, the second variant also satisfies requirement \eqref{eq:metric}. Figure~\ref{fig:unique} compares the number of unique sensors in the first variant with the bound from Proposition~\ref{thm:unique} and we see an overlap of the two curves as is expected and with a very small variance. Figure~\ref{fig:samples} compares the number of samples of sensors using the two variants of Algorithm~\ref{algo:randsensor}. For the first variant, this is the bound on $c$ given by Theorem~\ref{thm:randsensor} and for the second, it is given by Proposition~\ref{thm:unique}. The results suggest that for $\epsilon > 0.8$, the number of samples $c < m$. 

\begin{figure}[h]
    \centering
           \includegraphics[width = 0.6\columnwidth]{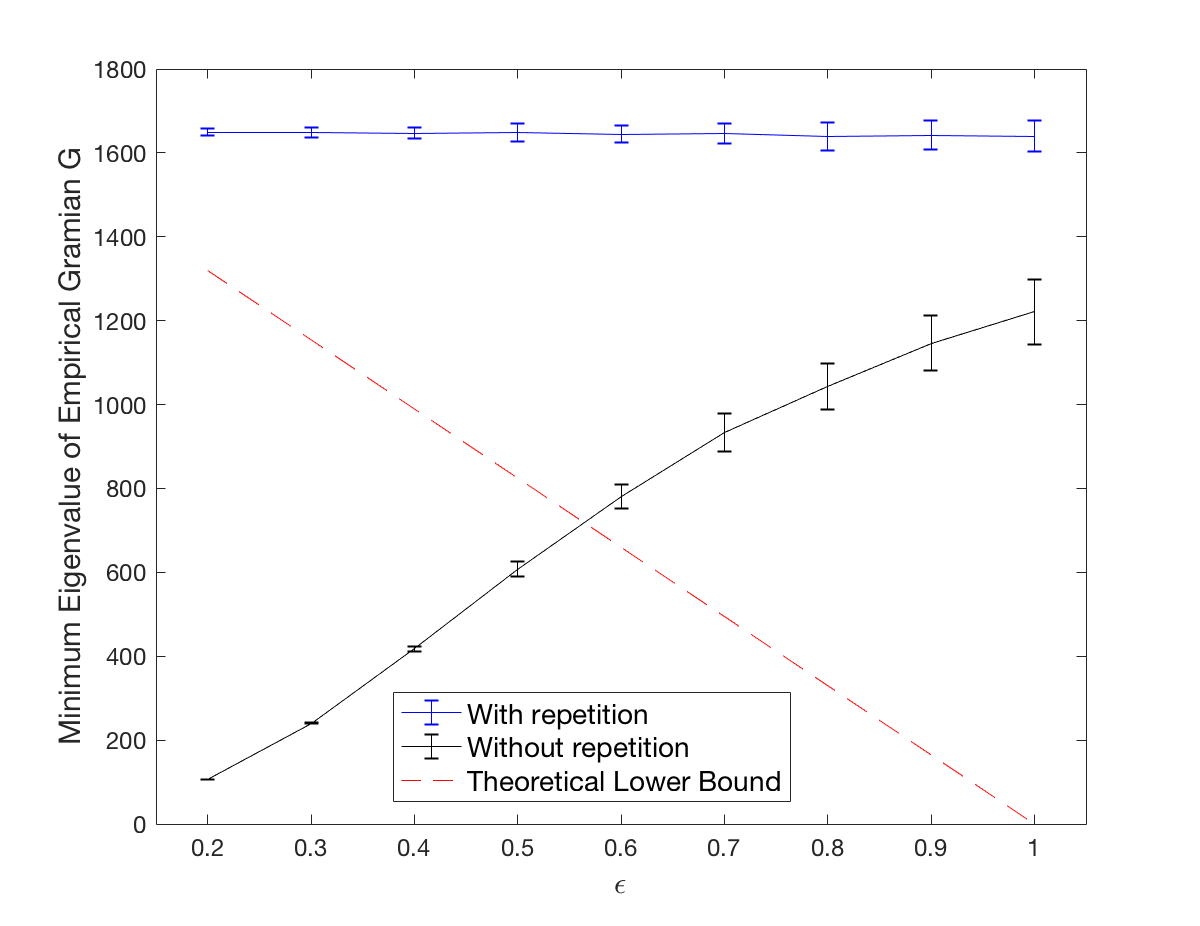}       
	\caption{Empirical comparison of Algorithm~\ref{algo:randsensor} with the minimum eigenvalue of the observability Gramian as metric with the bounds from Theorem~\ref{thm:randsensor}. Two variants of the algorithm are compared, one in which repetitions of sensors are allowed and another in which only unique sensors are considered. The error bars represent $\pm 1$ standard deviation.}
	\label{fig:mineig}
\end{figure}
\begin{figure}[h]
\centering
    \includegraphics[width = 0.6\columnwidth]{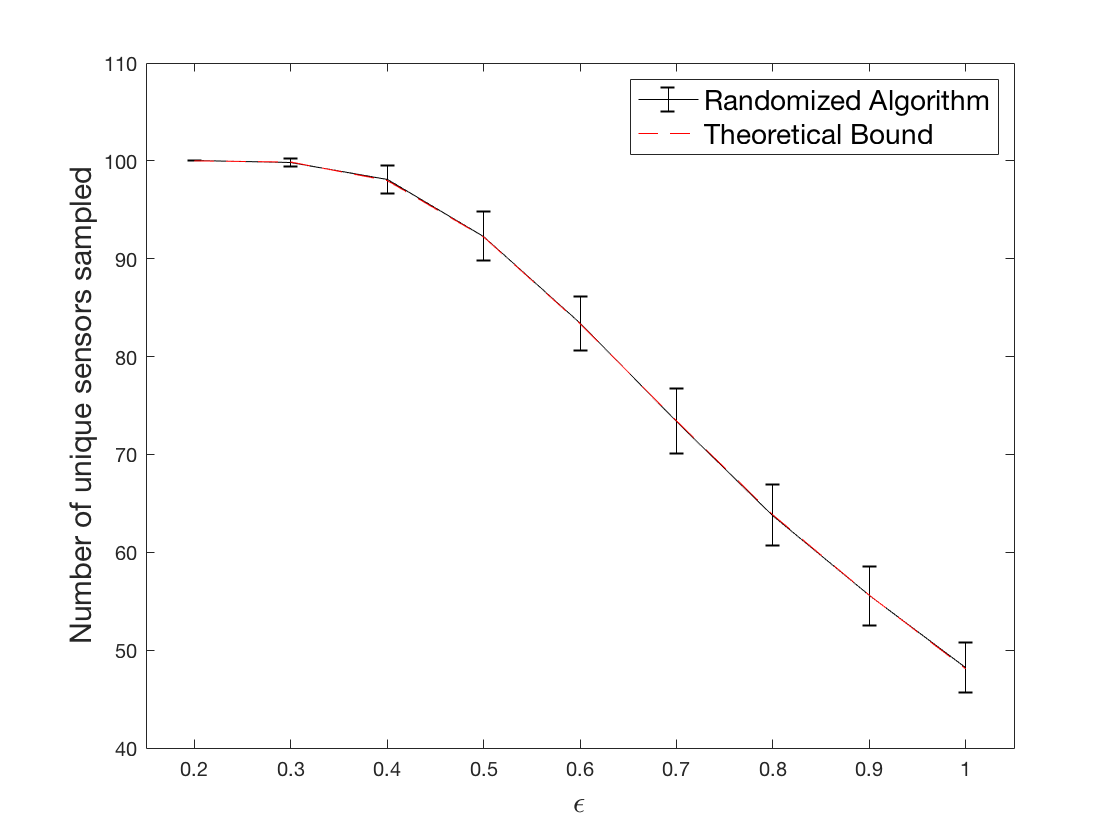}
    \caption{Number of unique sensors out of Algorithm~\ref{algo:randsensor} and the theoretical estimate from Proposition~\ref{thm:unique}.}
     \label{fig:unique}
     \end{figure}
\begin{figure}[h]
     \centering
    \includegraphics[width = 0.6\columnwidth]{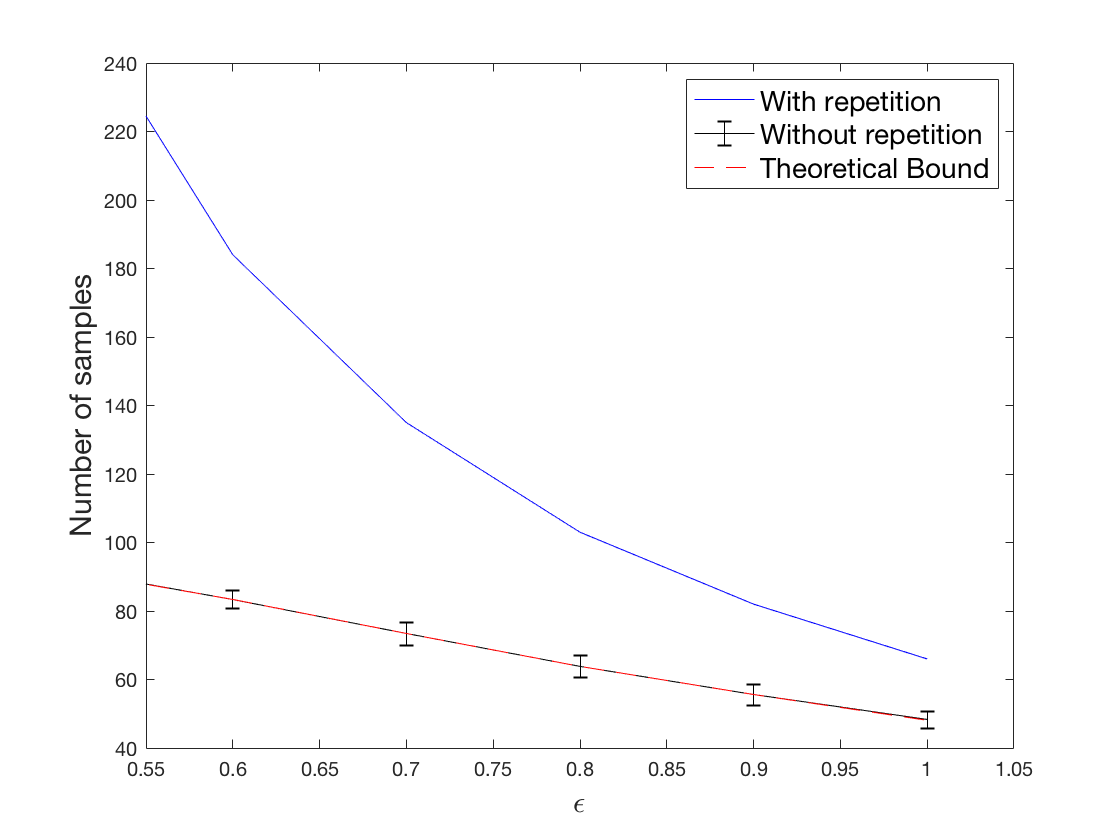}
     \caption{Comparison of the number of samples chosen by the two variants of Algorithm~\ref{algo:randsensor}.} 
         \label{fig:samples}
\end{figure}

\subsection{Comparison with a greedy heuristic}

We compared the performance of the two variants of Algorithm~\ref{algo:randsensor} to the following \emph{greedy} heuristic.
\begin{enumerate}
\item Sort the sensors in the decreasing order of $\mathcal{M}(W_i)$;
\item Keep selecting sensors until the metric evaluated over the selection exceeds $(1-\epsilon)\mathcal{M}(W)$;
\item Compare against the two variants of Algorithm~\ref{algo:randsensor} with the number of samples $c$ equal to the number of sensors $\subscr{c}{greedy}$ required by the greedy heuristic.
\end{enumerate}

\begin{figure}[h]
    \centering
       \includegraphics[width = 0.6\columnwidth]{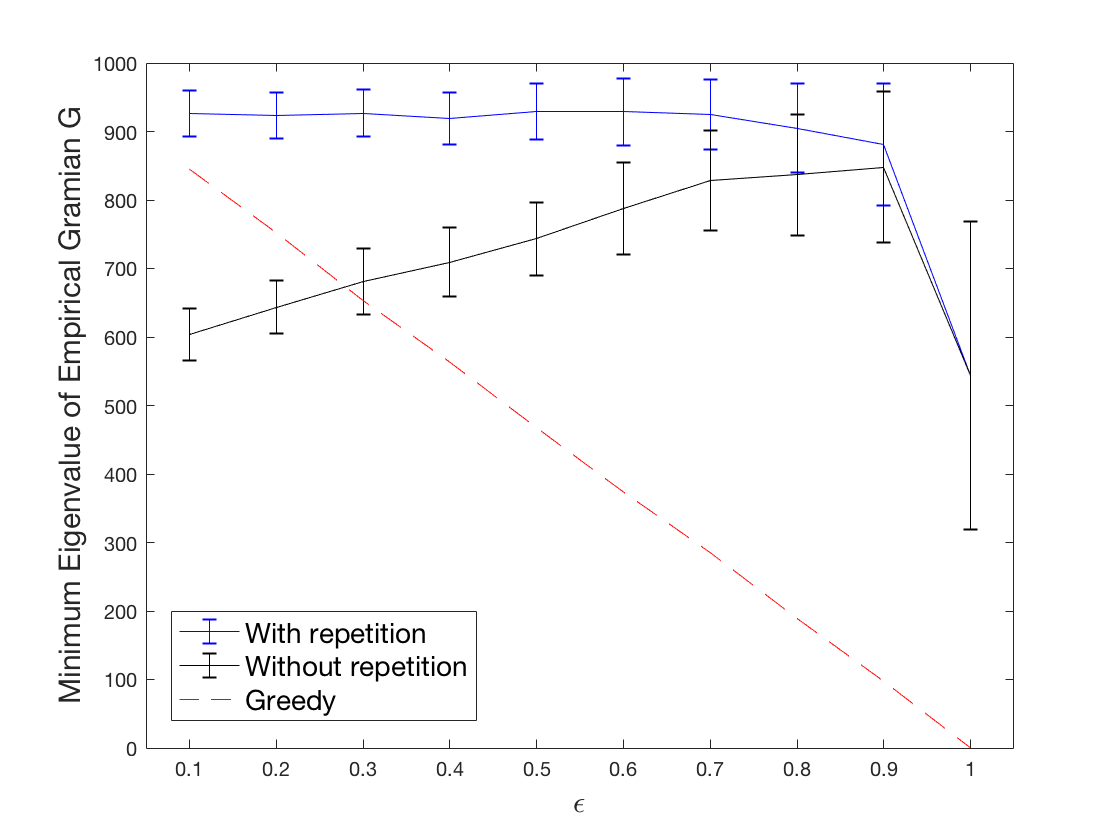}   
        \caption{Empirical comparison of the two variants of Algorithm~\ref{algo:randsensor} with a greedy heuristic in the minimum eigenvalue metric.}
         \label{fig:mineiggreedy}
    \end{figure}  
      \begin{figure}[h]
      \centering
       \includegraphics[width = 0.6\columnwidth]{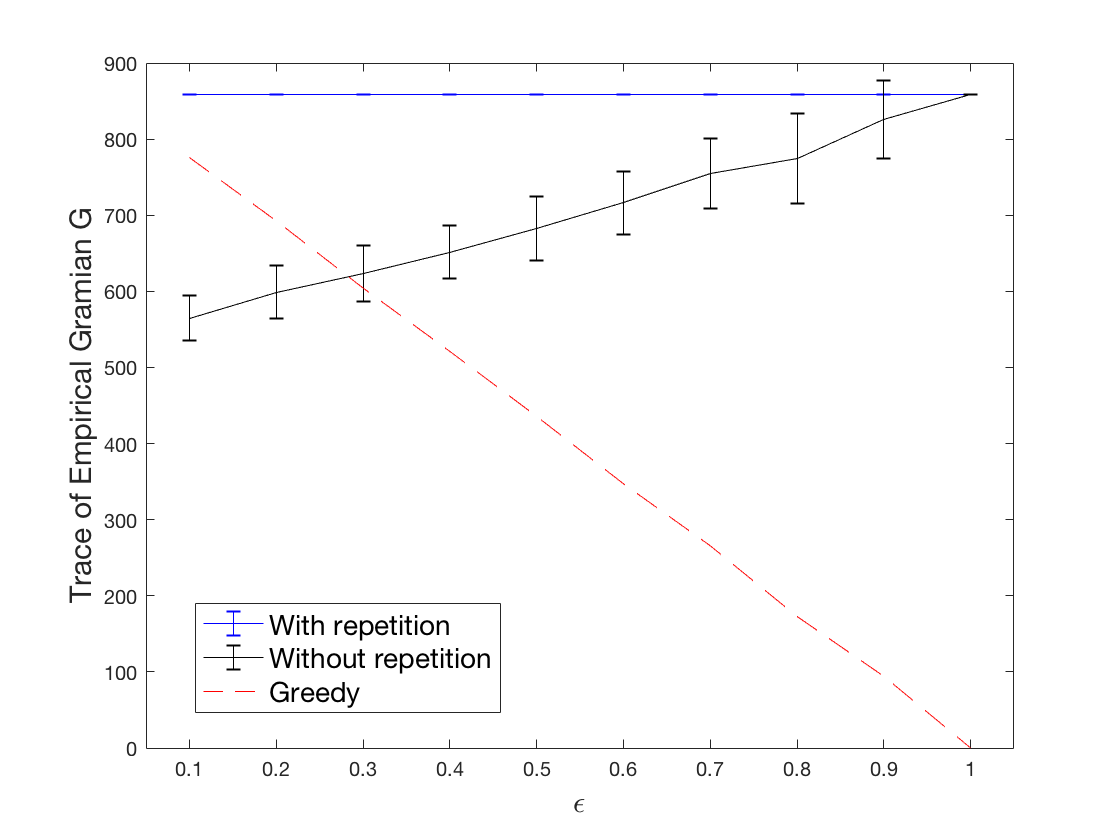}
        \caption{Empirical comparison of the two variants of Algorithm~\ref{algo:randsensor} with a greedy heuristic using the trace metric.}
       \label{fig:trace}
       \end{figure}
       \begin{figure}[h]
       \centering
       \includegraphics[width = 0.6\columnwidth]{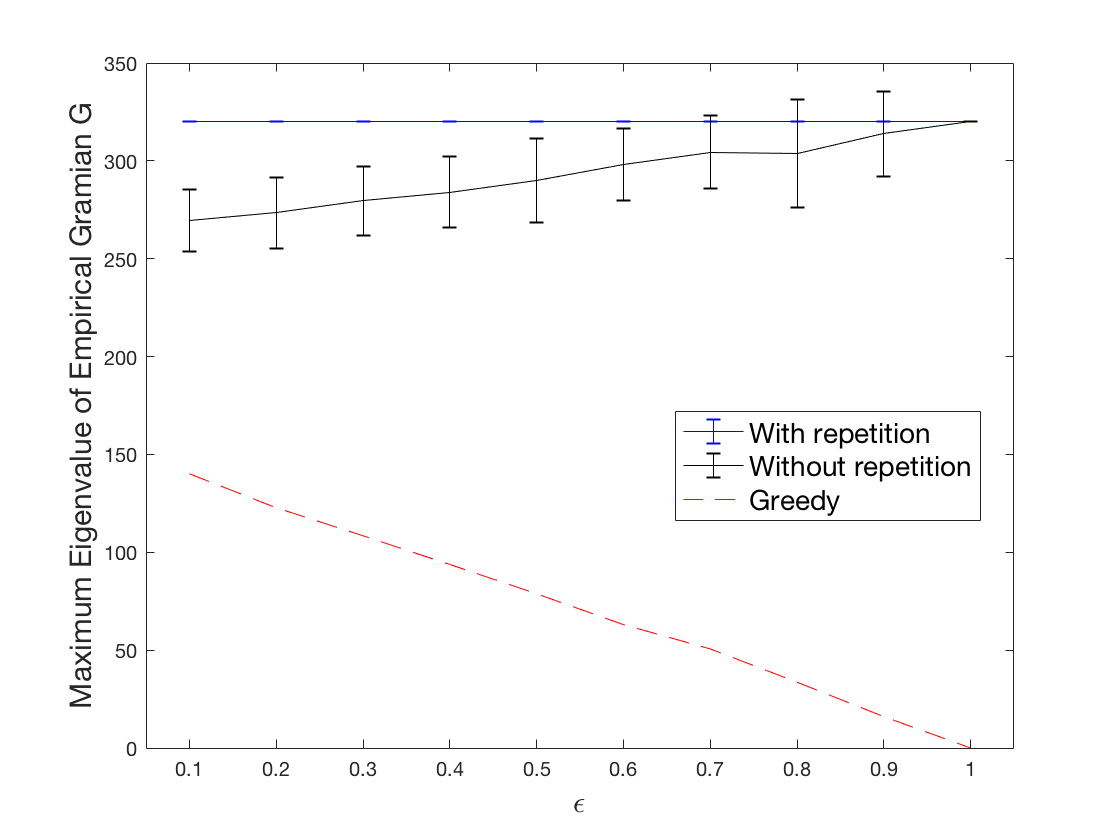}
    \caption{Empirical comparison of the two variants of Algorithm~\ref{algo:randsensor} with a greedy heuristic using the maximum eigenvalue metric.}
    \label{fig:maxeig}
\end{figure}

As reported in Figures~\ref{fig:mineiggreedy}, \ref{fig:trace} and \ref{fig:maxeig}, we observe that both variants of Algorithm~\ref{algo:randsensor} outperform the greedy algorithm for a wide range of $\epsilon$ for all three metrics. The variant in which repetitions are allowed outperforms the greedy heuristic over \emph{all} values of $\epsilon$. For the trace and the maximum eigenvalue, we normalized the $A$ matrix for numerical stability.

\begin{remark}[Alternate $C$ matrices]
We repeated the experiments with the choice of the $C$ matrix whose entries were drawn independently from the standard normal distribution. The results are observed to be qualitatively similar to the ones presented but the sample complexity bounds are observed to be about a factor of 2 higher. 
\end{remark}

\subsection{Recursive state estimation and Kalman filtering}
We now report the results of applying Algorithm~\ref{algo:randsensor} to least squares estimation as discussed in Section~\ref{sec:estimation}. We observe that the initial error covariance is an order of magnitude smaller than the theoretical upper bound derived in Proposition~\ref{prop:estimation}. Figure~\ref{fig:estimation} summarizes the result. 

\begin{figure}[t]
    \centering 
       \includegraphics[width = 0.6\columnwidth]{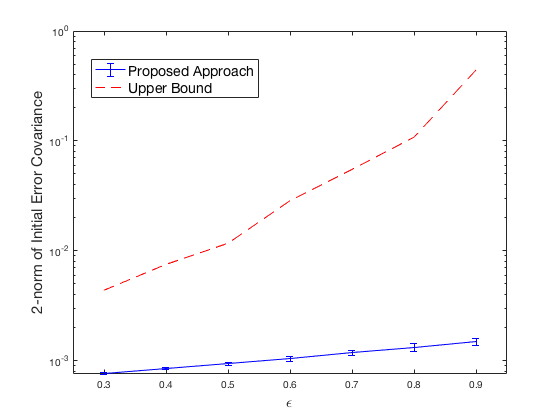}   
       \caption{Results of application of Algorithm~\ref{algo:randsensor} to recursive state estimation.}
       \label{fig:estimation}
 \end{figure}

We then applied the proposed approach to study the evolution of the steady state error covariance matrix $\bar{P}$ as discussed in Remark~\ref{rem:kf} and compare it with $P$. For varying values of $\epsilon$, Figure~\ref{fig:kf} reports the relative error 
\[
\frac{\norm{\bar{P} - P}}{\norm{P}}.
\]
We observe a fairly consistent nearly linear trend in the relative error with $\epsilon$. Future work will center on developing theoretical bounds to characterize this trend.

\begin{figure}[t]
    \centering 
       \includegraphics[width = 0.6\columnwidth]{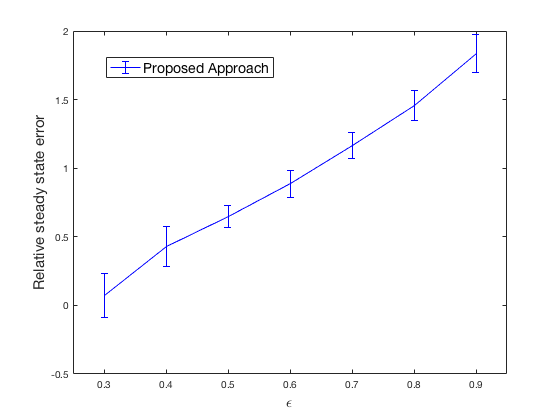}   
       \caption{Results of application of Algorithm~\ref{algo:randsensor} to Kalman filtering.}
       \label{fig:kf}
 \end{figure}

\section{Conclusion and Future Directions} \label{sec:conclusion}
This paper considered the problem of constructing an approximate linear dynamical system using only a subset of the sensors out of the total set such that the observability Gramian of the new system is approximately equal to that of the original system. We presented a novel randomized sampling algorithm that samples the sensors with replacement as per specified distributions. For specific metrics of the observability Gramian such as the trace or the maximum eigenvalue, we derived novel bounds on the number of samples required to yield a high probability lower bound on the metric evaluated on the approximate Gramian. With a larger number of samples, we then designed a different distribution to sample the sensors that leads to a high probability bound on the approximate Gramian and therefore, on any other standard metrics used in sensor selection such as the minimum eigenvalue or the trace of the Gramian inverse. This approach guarantees non-singularity of the approximate Gramian with high probability if the original system is observable. We demonstrated how the randomized procedure can be used for least squares estimation using fewer sensors than the original system and provide a high probability upper bound on the estimation error covariance. We supplemented the theoretical results with insightful numerical studies and comparisons with competing greedy approaches. Our approach outperforms the greedy algorithm in several scenarios.

This work suggests a number of open problems for future research. Examples include extensions to non-linear observability models using concepts from non-linear observability analysis, an improved analysis especially in the regime of low number of sensors and characterization of the error of the randomized algorithm relative to the best set of a given cardinality. A key feature of our proposed approach is that it is one-shot. Present efforts are on reducing the sample complexity requirements by developing sequential methods instead. Recent efforts in this direction includes \cite{jadbabaie2018} which addresses the problem of actuator scheduling. Another promising direction is a theoretical analysis of the application of the sampling procedure to bound the steady state error covariance matrix.


\appendix \label{sec:proof}
This section is dedicated to complete mathematical proofs of the theoretical statements from Section~\ref{sec:algorithm}. For increased clarity in the mathematical steps, we introduce the following matrices, 
\[
X_k := \begin{bmatrix}  c_k' & A'c_k' & \ldots & (A')^{T-1} c_k' \end{bmatrix} \in \real^{n \times T},
\]
for every $k \in \{1,\dots, m\}$ and $X:= \begin{bmatrix} X_1 & \dots & X_m \end{bmatrix}$. In short, we have $X_kX_k' = W_k$ and $XX' = W$.
 
\subsection{Proof of Theorem~\ref{thm:randsensor2}}
In order to derive a sharp sample complexity bound, we will require the following result on random matrices.

\begin{theorem}[Matrix Bernstein Inequality, \cite{tropp:11b}] \label{thm:bernstein}
Let $\X_1, \dots, \X_c$ be independent real symmetric random $n\times n$ matrices. Assume that, $\E[\X_j] = 0, \forall 1\leq j \leq c$, and that $\max_{1\leq j \leq c}\norm{\X_j}_2 \leq \rho_1$. Let $\norm{\sum_{j=1}^c\E[\X_j^2]}_2 \leq \rho_2$. Then, for any $\epsilon > 0$, 
\[
\Prob(\norm{\sum_{j=1}^c \X_j}_2 \geq \epsilon) \leq n \exp \Big(-\frac{\epsilon^2/2}{\rho_2 + \rho_1\epsilon/3} \Big),
\]
where the probability is defined with respect to the joint distribution of the $\X_j$'s.
\end{theorem}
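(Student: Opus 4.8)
The plan is to prove this by the matrix Laplace transform method, the standard route to matrix concentration due to Ahlswede--Winter and Tropp. First I would reduce the two-sided norm bound to a one-sided eigenvalue bound: since each $\X_j$ is symmetric, so is $S := \sum_{j=1}^c \X_j$, and $\norm{S}_2 \geq \epsilon$ holds precisely when $\lambda_{\max}(S) \geq \epsilon$ or $\lambda_{\max}(-S) \geq \epsilon$. The family $\{-\X_j\}$ satisfies the same hypotheses, because $\E[-\X_j]=0$, $\norm{-\X_j}_2 \le \rho_1$, and $\E[(-\X_j)^2] = \E[\X_j^2]$, so it suffices to bound $\Prob(\lambda_{\max}(S) \geq \epsilon)$ and then appeal to this symmetry.

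For the one-sided bound, fix $\theta > 0$ and apply Markov's inequality after the monotone transformation $\lambda_{\max}(S) \mapsto e^{\theta\lambda_{\max}(S)} = \lambda_{\max}(e^{\theta S})$, dominating the top eigenvalue of the positive-definite matrix $e^{\theta S}$ by its trace:
\[
\Prob(\lambda_{\max}(S) \geq \epsilon) \leq e^{-\theta\epsilon}\,\E[\lambda_{\max}(e^{\theta S})] \leq e^{-\theta\epsilon}\,\E[\Tr\, e^{\theta S}].
\]
The crux is to control $\E[\Tr\, e^{\theta S}]$ for a sum of \emph{independent} matrices that need not commute, so that $e^{\theta S} \ne \prod_j e^{\theta\X_j}$ in general. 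Here I would invoke Lieb's concavity theorem (the map $H \mapsto \Tr\exp(K + \log H)$ is concave on the positive-definite cone), combined with Jensen's inequality applied through iterated conditioning on $\X_1,\dots,\X_c$, to obtain the subadditivity of the matrix cumulant generating function,
\[
\E[\Tr\, e^{\theta S}] \leq \Tr\exp\Big(\sum_{j=1}^c \log \E[e^{\theta\X_j}]\Big).
\]

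Next I would bound each summand in the semidefinite order. The scalar inequality $e^{\theta y} \le 1 + \theta y + g(\theta)\,y^2$, valid for $|y|\le\rho_1$ and $0<\theta<3/\rho_1$ with $g(\theta) := \frac{\theta^2/2}{1-\rho_1\theta/3}$ (obtained by bounding the Taylor tail of $e^{\theta y}-1-\theta y$ using $k! \ge 2\cdot 3^{k-2}$), lifts eigenvalue-wise in the diagonalizing basis of $\X_j$ to $e^{\theta\X_j} \preceq I + \theta\X_j + g(\theta)\X_j^2$. Taking expectations and using $\E[\X_j]=0$ gives $\E[e^{\theta\X_j}] \preceq I + g(\theta)\,\E[\X_j^2]$; operator monotonicity of $\log$ together with $\log(I+M)\preceq M$ then yields $\log\E[e^{\theta\X_j}] \preceq g(\theta)\,\E[\X_j^2]$. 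Summing, invoking monotonicity of the trace exponential under $\preceq$, and using $\sum_j \E[\X_j^2] \preceq \rho_2 I$ (the $\E[\X_j^2]$ are positive semidefinite, so $\norm{\sum_j\E[\X_j^2]}_2 \le\rho_2$ forces this) with $\Tr\exp(M) \le n\exp(\lambda_{\max}(M))$, I arrive at
\[
\E[\Tr\, e^{\theta S}] \leq n\exp\big(g(\theta)\rho_2\big).
\]

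Combining the last two displays gives $\Prob(\lambda_{\max}(S)\ge\epsilon) \le n\exp\big(-\theta\epsilon + g(\theta)\rho_2\big)$, and optimizing over $\theta\in(0,3/\rho_1)$ via the choice $\theta = \epsilon/(\rho_2 + \rho_1\epsilon/3)$ produces the exponent $-\tfrac{\epsilon^2/2}{\rho_2+\rho_1\epsilon/3}$. The same bound applies verbatim to $\lambda_{\max}(-S)$, and the reduction of the first paragraph then yields the stated estimate (a union bound over the two eigenvalue events nominally contributes the prefactor, which is the $n$ reported here and appears as $2n$ in the downstream two-sided application of Theorem~\ref{thm:randsensor}). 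The main obstacle is the second paragraph: establishing trace-exponential subadditivity across non-commuting independent summands. This is exactly where Lieb's concavity theorem is indispensable, since elementary substitutes such as the Golden--Thompson inequality control only two factors at a time and do not iterate cleanly for $c > 2$.
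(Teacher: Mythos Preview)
The paper does not prove this statement. Theorem~\ref{thm:bernstein} is stated in the appendix as a quoted result from Tropp~\cite{tropp:11b} and is used only as a black box in the proof of Theorem~\ref{thm:randsensor2}; no argument for it is given anywhere in the paper.

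Your proposal is in fact a faithful sketch of Tropp's own proof: the matrix Laplace transform, the passage from $\lambda_{\max}$ to $\Tr\exp$, the use of Lieb's concavity theorem to obtain subadditivity of the matrix cumulant generating function across independent (non-commuting) summands, the eigenvalue-wise lift of the scalar Bernstein mgf bound, and the final optimization over $\theta$. Each step is standard and correct as described. You also correctly flag the prefactor issue: a clean union bound over $\lambda_{\max}(S)$ and $\lambda_{\max}(-S)$ yields $2n$, whereas the paper restates the inequality with $n$; this is a minor looseness in the paper's transcription of the cited result, not a defect in your argument.
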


\begin{theorem}[Theorem~2.1 from \cite{zhan:00}]\label{thm:diffsing}
If $B$ and $C$ are $n\times n$ real symmetric positive semi-definite matrices. Then, $\norm{B-C}_2 \leq \max\{\norm{B}_2, \norm{C}_2\}$.
\end{theorem}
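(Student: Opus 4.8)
The plan is to convert the operator-norm estimate into a pair of one-sided eigenvalue estimates and then use positive semidefiniteness to eliminate the ``wrong-sign'' contribution. First I would record that, since $B-C$ is real symmetric, its spectral norm equals the largest eigenvalue in absolute value, i.e. $\norm{B-C}_2 = \max\{\lambda_{\max}(B-C),\,\lambda_{\max}(C-B)\}$, where the second argument accounts for $-\lambda_{\min}(B-C)$. Hence it suffices to prove $\lambda_{\max}(B-C)\le \max\{\norm{B}_2,\norm{C}_2\}$, and by the symmetric roles of $B$ and $C$ that single bound immediately yields the companion bound on $\lambda_{\max}(C-B)$.

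For that bound I would use the Rayleigh quotient: for any unit vector $x$, $x'(B-C)x = x'Bx - x'Cx \le x'Bx$, because $C\succeq 0$ forces $x'Cx\ge 0$; and $x'Bx \le \lambda_{\max}(B) = \norm{B}_2$, since $B$ is symmetric positive semidefinite so its largest eigenvalue coincides with its spectral norm. Taking the supremum over unit $x$ gives $\lambda_{\max}(B-C)\le \norm{B}_2$, and interchanging the roles of $B$ and $C$ gives $\lambda_{\max}(C-B)\le \norm{C}_2$; combining the two proves the claim.

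I do not expect a genuine obstacle here. The only point requiring care is that the spectral norm of an indefinite symmetric matrix is $\max(\lambda_{\max},-\lambda_{\min})$ rather than merely $\lambda_{\max}$, so one must estimate $B-C$ in both directions — and the positive semidefiniteness hypothesis is exactly what makes each direction collapse onto one of the two input norms. (The stronger singular-value majorization established in \cite{zhan:00} would require Weyl-type perturbation inequalities, but the spectral-norm special case invoked here does not.)
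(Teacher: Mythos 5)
Your argument is correct. Note, however, that the paper does not prove this statement at all: it is imported verbatim as Theorem~2.1 of \cite{zhan:00}, where Zhan establishes the stronger singular-value majorization $\sigma_j(B-C)\le \sigma_j(B\oplus C)$ for all $j$ (the cited inequality being the $j=1$, spectral-norm case), by genuinely matrix-analytic arguments. What you supply instead is a short, self-contained proof of exactly the special case the paper uses: writing $\norm{B-C}_2=\max\{\lambda_{\max}(B-C),\lambda_{\max}(C-B)\}$ for the symmetric matrix $B-C$ and killing the negative term in the Rayleigh quotient via $x'Cx\ge 0$ (resp.\ $x'Bx\ge 0$) gives $\lambda_{\max}(B-C)\le\norm{B}_2$ and $\lambda_{\max}(C-B)\le\norm{C}_2$, hence the claim. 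All steps check out, including the point you flag yourself, that for an indefinite symmetric matrix one must bound both $\lambda_{\max}$ and $-\lambda_{\min}$, which is precisely where both semidefiniteness hypotheses get used. The trade-off is the expected one: your elementary argument makes the paper's use of the lemma (in the proof of Theorem~\ref{thm:randsensor2}, to bound $\norm{\X_j}_2$ and $\norm{L-X'X}_2$) independent of the external reference, while Zhan's theorem buys the full spectrum of singular-value inequalities, none of which are needed here.
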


\begin{proof}[Proof of Theorem~\ref{thm:randsensor2}]
The proof is inspired by the steps followed in the proof of Theorem~7.5 from \cite{holodnak:15} with the key difference that the Gramian $W$ is not a sum of rank 1 matrices in our case. We first change variables and define 
\[
\X_j := \frac{1}{cp_j}X_jX_j' - \frac{1}{c}XX'.
\]
\sdb{The distribution of the random matrix $\frac{1}{cp_j}X_jX_j'$ and therefore, of $\X_j$ is specified by the sampling probability $p_j$.} Observe that, by construction,
\[
\E[\X_j] = \frac{1}{c}(\sum_{k=1}^m X_kX_k' -  XX') = 0.
\]
Further,
\[
G - W = \sum_{j=1}^c \Big ( \frac{1}{cp_j}X_jX_j' - \frac{1}{c}XX' \Big ) = \sum_{j=1}^c \X_j.
\]
The first step is to show that $\norm{G - XX'}_2 \leq \bar{\epsilon}$ by showing that $\norm{\sum_{j=1}^c \X_j}_2 \leq \bar{\epsilon}$.

Since $\X_j$ is a difference of positive semi-definite matrices, applying Theorem~\ref{thm:diffsing}, we have
\begin{multline*}
\norm{\X_j}_2 \leq \max \{\frac{1}{cp_j}\norm{X_jX_j'}_2, \frac{1}{c}\norm{XX'}_2 \} \leq \hat{\rho_1}/c, \quad \hat{\rho}_1 := \max \Big \{ \max_{1\leq i \leq m} \Big \{ \frac{\norm{X_i}_2^2}{p_i} \Big \}, \norm{X}_2^2 \Big \}.
\end{multline*}

Next, we compute a bound for $\norm{\sum_{j=1}^c\E[\X_j^2]}_2$. We begin with
\[
\X_j^2 = (X_jX_j')^2 - \frac{1}{c}XX'(X_jX_j') - \frac{1}{c} (X_jX_j')XX' + \frac{1}{c^2}X(X'X)X'.
\]
Taking expectation, and using the fact that
\[
\E[X_jX_j'] = \frac{1}{c}XX',
\]
we obtain
\[
\E[\X_j^2] = \E[(X_jX_j')^2] - \frac{1}{c^2}X(X'X)X'.
\]
Thus, 
\begin{align*}
\sum_{j=1}^c \E[\X_j^2] &= \sum_{j=1}^c \E[(X_jX_j')^2] - \frac{1}{c^2}X(X'X)X'\\ &= \frac{1}{c} \Big ( \sum_{i=1}^m \frac{(X_iX_i')^2}{p_i} - X(X'X)X' \Big).
\end{align*}
Setting $X_i = XE_i$, where $E_i \in \real^{n\times mT}$ in which the $i$-th block rows contains the identity matrix of size $T\times T$, i.e., $E_i := \begin{bmatrix} 0_{T\times T} & \ldots & I_{T\times T} &\ldots & 0_{T\times T}  \end{bmatrix}$.

Thus, we have
\begin{align*}
\sum_{j=1}^c \E[\X_j^2] &= \frac{1}{c} X \Big (\sum_{i = 1}^m E_i \frac{(X_i'X_i)}{p_i} E_i' - X'X \Big) X'\\ &= \frac{1}{c}X(L - X'X)X',
\end{align*}
where $L := \blkdiag(E_1(X_1'X_1)E_1'/p_1, \dots, \\ E_m(X_m'X_m)E_m'/p_m)$. Taking norms and applying Theorem~\ref{thm:diffsing} to $\norm{L - X'X}_2$, we obtain
\begin{align}\label{eq:rho1}
&\norm{\sum_{j=1}^c \E[\X_j^2]}_2 \leq \frac{\norm{X}_2^2}{c} \max\{\norm{L}_2, \norm{X'X}_2 \}  \nonumber \\ &= \frac{\norm{X}_2^2}{c} \max\{\norm{X_1'X_1}_2/p_1, \dots, \norm{X_m'X_m}_2/p_m, \norm{X}_2^2 \} \nonumber \\ &= \frac{\norm{X}_2^2}{c} \hat{\rho}_1.
\end{align}

Setting, $p_i = \norm{X_iX_i'}_2 / \sum_{k=1}^m\norm{X_kX_k'}_2$, we obtain
\begin{align}\label{eq:rho2}
\hat{\rho}_1 &= \max\{\sum_{k=1}^m \norm{X_k X_k'}_2, \norm{X}_2^2 \} \leq \sum_{k=1}^m \norm{X_k X_k'}_2,
\end{align}
where we used the triangle inequality to conclude the final step. Combining \eqref{eq:rho1} and \eqref{eq:rho2} and applying Theorem~\ref{thm:bernstein}, we conclude that
\[
\Prob(\norm{\sum_{j=1}^c \X_j}_2 \geq \bar{\epsilon}) \leq n \exp \Big( \frac{-c\bar{\epsilon}^2}{2\hat{\rho}_1(\norm{X}_2^2 + \bar{\epsilon}/3)} \Big ).
\]
Setting the right hand side equal to $\delta$, and solving for $\bar{\epsilon}$, we obtain
\[
\bar{\epsilon} = \frac{\hat{\rho}_1}{3c}\log\frac{n}{\delta}  + \sqrt{\frac{\hat{\rho}_1}{3c}\log\frac{n}{\delta} (6\norm{X}_2^2  + \frac{\hat{\rho}_1}{3c}\log\frac{n}{\delta} )}.
\]
From \eqref{eq:rho2}, $\hat{\rho}_1 \leq \sum_{k=1}^m \norm{X_k X_k'}_2$, which yields
\begin{multline*}
\frac{\bar{\epsilon}}{\norm{X}_2^2} \leq  \frac{ \sum_{k=1}^m \norm{X_k X_k'}_2}{3c\norm{X}_2^2}\log\frac{n}{\delta} + \sqrt{ \frac{ \sum_{k=1}^m \norm{X_k X_k'}_2}{3c\norm{X}_2^2}\log\frac{n}{\delta} (6 +  \frac{ \sum_{k=1}^m \norm{X_k X_k'}_2}{3c\norm{X}_2^2}\log\frac{n}{\delta}) }.
\end{multline*}
The result is obtained by verifying the fact that for the particular choice of $c \geq 2.7\frac{\sum_{k=1}^m \norm{X_k X_k'}_2 \log(n/\delta)}{\epsilon^2\norm{X}_2^2}$, the above right hand side does not exceed $\epsilon$.
\end{proof}

\subsection{Proof of Theorem~\ref{thm:randsensor2}}
For $i \in \{1,\dots, n\}$ and $t \in \{1,\dots, c\}$, let \sdb{$Y_t^{ii}$} denote the $(i,i)$-th entry of the matrix $X_tX_t'/(cp_t)$. Let $X_t^i$ denote the $i$-th row of the matrix $X_t$. Then,
\[
\sdb{Y_t^{ii}} = \frac{1}{c}\frac{X_t^i {X_t^i}'}{p_t}.
\]
Therefore, 
\begin{align*}
\E[\sdb{Y_t^{ii}}] &= \frac{1}{c}\sum_{k=1}^m p_k \frac{X_k^i {X_k^i}'}{p_k} = \frac{1}{c} \sum_{k=1}^m [X_kX_k']_{ii} = \frac{1}{c}W_{ii}, \\
\end{align*}
Since the $(i,i)$-th entry of $G$ is given by $G_{ii} = \sum_{t=1}^c \sdb{Y_t^{ii}}$, we have $\E[G_{ii}] = \sum_{t=1}^c\E[\sdb{Y_t^{ii}}] = W_{ii}$. Therefore, it follows that $\E[\Tr(W-G)] = 0$. Next, we will show that the second moment of the random variable $\Tr(W-G)$ is also zero. We begin with
\begin{align*}
\E[\Tr^2(W - G)] &= \E[(\sum_{i=1}^n (W_{ii} - G_{ii}))^2] \\
&= \E[\sum_{i, j=1}^n (W_{ii} - G_{ii})(W_{jj} - G_{jj})] \\
&= \sum_{i, j=1}^n \E[(W_{ii} - G_{ii})(W_{jj} - G_{jj})] \\
&= -\sum_{i,j = 1}^n W_{ii}W_{jj} + \sum_{i,j = 1}^n \E[G_{ii}G_{jj}] \\
&= -\Tr^2(W) + \sum_{i,j = 1}^n \E[G_{ii}G_{jj}].
\end{align*}
Since $G_{ii} = \sum_{t=1}^c Y^{ii}_t$, $G_{jj} = \sum_{t=1}^c Y^{jj}_t$ and the random variables $Y^{ii}_t$ and $Y^{jj}_\tau$ are \emph{independent} for $t \neq \tau$ (see step 6 of Algorithm~\ref{algo:randsensor}), we have
\begin{align*}
\sum_{i,j = 1}^n \E[ G_{ii} G_{jj}] &= \sum_{i, j= 1}^n \E[ \sum_{t=1}^c Y^{ii}_t \sum_{\tau=1}^c Y^{jj}_\tau] \\
&= \sum_{i, j= 1}^n \Big ( \E [\sum_{t=1}^c Y^{ii}_t Y^{jj}_t] + \sum_{t, \tau = 1, t\neq \tau} \E[Y^{ii}_t Y^{jj}_\tau]  \Big)\\
&= \sum_{i, j= 1}^n \Big ( \sum_{t=1}^c \E[Y^{ii}_t Y^{jj}_t] + \sum_{t, \tau = 1, t\neq \tau} \E[Y^{ii}_t ]\E[Y^{jj}_\tau] \Big) \\
&= \sum_{i, j= 1}^n \Big ( \sum_{t=1}^c \E[Y^{ii}_t Y^{jj}_t] + \frac{1}{c^2} \sum_{t, \tau = 1, t\neq \tau}^c W_{ii}W_{jj} \Big) \\
&= \sum_{i, j= 1}^n \sum_{t=1}^c \E[Y^{ii}_t Y^{jj}_t] + \frac{(c^2-c)}{c^2} \sum_{i, j= 1}^n W_{ii}W_{jj} \\
&= \sum_{i, j= 1}^n \sum_{t=1}^c \E[Y^{ii}_t Y^{jj}_t] + \Big(1-\frac{1}{c}\Big) \Tr^2(W).
\end{align*}
Using the fact that
\begin{align*}
\E[Y^{ii}_t Y^{jj}_t] = \frac{1}{c^2} \sum_{k=1}^m \frac{(X^i_k{X^i_k}')(X^j_k{X^j_k}')}{p_k},
\end{align*}
and switching the summation over $t$ and $k$, we have
\begin{align*}
\sum_{i,j = 1}^n \E[ G_{ii} G_{jj}] &= \frac{1}{c^2} \sum_{i, j= 1}^n   \sum_{k=1}^m \sum_{t=1}^c \frac{(X^i_k{X^i_k}')(X^j_k{X^j_k}')}{p_k} + \Big(1-\frac{1}{c}\Big) \Tr^2(W) \\
&= \frac{1}{c} \sum_{k=1}^m \frac{1}{p_k} \sum_{i, j= 1}^n  (X^i_k{X^i_k}')(X^j_k{X^j_k}') + \Big(1-\frac{1}{c}\Big) \Tr^2(W) \\
&= \frac{1}{c} \sum_{k=1}^m \frac{1}{p_k} \Big (\sum_{i = 1}^n  (X^i_k{X^i_k}') \Big)^2 + \Big(1-\frac{1}{c}\Big) \Tr^2(W) \\
&= \frac{1}{c} \sum_{k=1}^m \frac{1}{p_k} \Tr^2(X_k{X_k}') + \Big(1-\frac{1}{c}\Big) \Tr^2(W). 
\end{align*}
Substituting $p_k := \Tr(X_k{X_k}')/ \sum_{k=1}^n\Tr(X_k{X_k}')$,
\begin{align*}
\E[\Tr^2(W - G)] &= \frac{1}{c} \sum_{k=1}^m \frac{1}{p_k} \Tr^2(X_k{X_k}') -\frac{1}{c} \Tr^2(W) \\
&= \frac{1}{c}( \sum_{k=1}^m \Tr (X_k{X_k}'))^2 -\frac{1}{c} \Tr^2(W) \\
&= \frac{1}{c} \Tr^2(W)-\frac{1}{c} \Tr^2(W)  = 0.
\end{align*}
The claim now follows upon application of \sdb{Cantelli's inequality\footnote{Cantelli's inequality \cite{ngo2011}: Given a real valued random variable $Z$ with mean $\mu$ and variance $\sigma^2$, $\Prob(Z - \mu \geq -\lambda) \geq 1 - \frac{\sigma^2}{\sigma^2 + \lambda}$, for any $\lambda > 0$.}.}

\subsection{Proofs of Theorem~\ref{thm:randsensor} and Proposition~\ref{thm:unique}}
The following matrix inequality (Corollary 2.2.2 from \cite{qiu2014}) will be useful in establishing Theorem~\ref{thm:randsensor}.

\begin{theorem}[Ahswede-Winter Inequality]\label{thm:chernoff}
Let $Z$ be an $n \times n$ random, symmetric, positive semi-definite matrix. Define $U=\E[Z]$ and suppose that $Z\leq \rho U$, for some scalar $\rho \geq 1$. Let $Z_1, \dots, Z_c$ denote independent copies of $Z$, i.e., independently sampled matrices with the same distribution as $Z$. For any $\epsilon \in (0,1)$, we have
\[
\Prob\Big ( (1-\epsilon) U \preceq \frac{1}{c}\sum_{k=1}^c Z_k \preceq (1+\epsilon) U \Big ) \geq 1 - 2n\e^{-\frac{\epsilon^2c}{4\rho}}.
\]
\end{theorem}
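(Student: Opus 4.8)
The plan is to establish this two-sided bound by the matrix Laplace-transform (matrix Chernoff) method, controlling each tail separately and combining them with a union bound. First I would reduce to a normalized setting. Since $Z \preceq \rho U$ forces $\ker U \subseteq \ker Z$ (if $Uv=0$ then $v'Zv \leq \rho\, v'Uv = 0$ and $Z \succeq 0$ give $Zv=0$), every realization lives in the range of $U$, so I may restrict to that subspace and assume without loss of generality that $U$ is positive definite. Whitening via $Y_k := U^{-1/2} Z_k U^{-1/2}$ then produces i.i.d.\ symmetric matrices with $0 \preceq Y_k \preceq \rho I$ and $\E[Y_k] = I$, and conjugating the target sandwich relation by $U^{-1/2}$ turns it into $(1-\epsilon) I \preceq \frac{1}{c}\sum_{k=1}^c Y_k \preceq (1+\epsilon) I$. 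Writing $S := \sum_{k=1}^c Y_k$, the failure event is $\{\lambda_{\max}(S) > (1+\epsilon)c\} \cup \{\lambda_{\min}(S) < (1-\epsilon)c\}$, and it suffices to bound each of these by $n\,\e^{-\epsilon^2 c/(4\rho)}$.

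For the upper tail, fix $\theta > 0$ and apply Markov's inequality to $\e^{\theta \lambda_{\max}(S)} = \lambda_{\max}(\e^{\theta S}) \leq \Tr \e^{\theta S}$, giving $\Prob(\lambda_{\max}(S) \geq (1+\epsilon)c) \leq \e^{-\theta(1+\epsilon)c}\, \E[\Tr \e^{\theta S}]$. The central difficulty is that $S$ is a sum of non-commuting matrices, so the scalar factorization $\E \e^{\theta \sum X_k} = \prod \E \e^{\theta X_k}$ is unavailable; this is the main obstacle and I would resolve it with the Golden--Thompson inequality $\Tr \e^{A+B} \leq \Tr(\e^A \e^B)$. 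Conditioning on $Y_1,\dots,Y_{c-1}$ and taking the expectation over the independent $Y_c$ yields $\E_{Y_c}[\Tr \e^{\theta S}] \leq \Tr\!\big(\e^{\theta(S - Y_c)}\, \E[\e^{\theta Y_c}]\big) \leq \mu\,\Tr \e^{\theta(S-Y_c)}$, where $\mu := \lambda_{\max}(\E[\e^{\theta Y}])$ and I have used $\Tr(P M) \leq \lambda_{\max}(M)\Tr P$ for $P \succeq 0$. Iterating this peeling step $c$ times collapses the full expectation to $\E[\Tr \e^{\theta S}] \leq n\,\mu^c$.

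It then remains to control the one-matrix transform. Using convexity of $x \mapsto \e^{\theta x}$ on $[0,\rho]$ and the chord bound $\e^{\theta x} \leq 1 + \tfrac{x}{\rho}(\e^{\theta\rho}-1)$, which lifts to the operator inequality $\e^{\theta Y} \preceq I + \tfrac{Y}{\rho}(\e^{\theta\rho}-1)$ since $\mathrm{spec}(Y) \subseteq [0,\rho]$, I take expectations (recalling $\E[Y]=I$) and apply $1+t \leq \e^t$ to obtain $\E[\e^{\theta Y}] \preceq \exp\!\big(\tfrac{\e^{\theta\rho}-1}{\rho}\big) I$. Substituting and optimizing at $\theta = \tfrac{1}{\rho}\ln(1+\epsilon)$ gives $\Prob(\lambda_{\max}(S) \geq (1+\epsilon)c) \leq n\exp\!\big(-\tfrac{c}{\rho}[(1+\epsilon)\ln(1+\epsilon)-\epsilon]\big)$.

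The lower tail is symmetric: I apply the same argument to $-S$ with a negative exponent, use the chord bound for $\e^{-\theta x}$, and optimize at $\theta = -\tfrac{1}{\rho}\ln(1-\epsilon)$ to get $\Prob(\lambda_{\min}(S) \leq (1-\epsilon)c) \leq n\exp\!\big(-\tfrac{c}{\rho}[\epsilon + (1-\epsilon)\ln(1-\epsilon)]\big)$. Finally I would invoke the elementary estimates $(1+\epsilon)\ln(1+\epsilon)-\epsilon \geq \epsilon^2/4$ and $\epsilon+(1-\epsilon)\ln(1-\epsilon) \geq \epsilon^2/4$ for $\epsilon\in(0,1)$ (each proved by checking the function vanishes with vanishing derivative at $\epsilon=0$ and has positive second derivative, since $\tfrac{1}{1\pm\epsilon} > \tfrac12$ there), so each tail is at most $n\,\e^{-\epsilon^2 c/(4\rho)}$; a union bound over the two events yields the claimed $1 - 2n\,\e^{-\epsilon^2 c/(4\rho)}$. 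Apart from the Golden--Thompson step, every ingredient is a routine Chernoff optimization together with the two scalar convexity bounds.
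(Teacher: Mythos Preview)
Your proof is correct and follows the classical Ahlswede--Winter argument (whitening, Golden--Thompson peeling, operator chord bound, Chernoff optimization). However, the paper does not actually prove this theorem: it is quoted verbatim as Corollary~2.2.2 from \cite{qiu2014} and used as a black-box tool in the proof of Theorem~\ref{thm:randsensor}. So there is no ``paper's own proof'' to compare against; you have supplied a complete proof where the paper simply cites the literature. Your argument is precisely the standard one underlying that cited result, and every step checks out, including the two scalar bounds $(1\pm\epsilon)\ln(1\pm\epsilon)\mp\epsilon \geq \epsilon^2/4$ on $(0,1)$ and the reduction to $\operatorname{range}(U)$ (which in fact yields the slightly sharper constant $\operatorname{rank}(U)$ in place of $n$).
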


\begin{proof}[Proof of Theorem~\ref{thm:randsensor}]
Observe that, by construction, for every $i \in \{1,\dots,c\}$,
\[
\E[X_{j_i}X_{j_i}'] = \sum_{k=1}^m p_k \frac{1}{p_k} X_kX_k'  = XX' = W.
\]
So if we define the random variables $Z_{j_i} := X_{j_i}X_{j_i}'/p_{j_i}$, then
\begin{align*}
Z_{j_i} = \frac{\sum_{k=1}^m \gamma_k}{\gamma_{j_i}} W_{j_i} \leq \sum_{k=1}^m \gamma_k W =: \rho\E[Z_{j_i}],
\end{align*}
where we used the definition of $\gamma_{j_i}$ in the inequality. We now claim that the constant $\rho \geq 1$. Suppose that $\rho < 1$. Then from the definition of $\gamma_{k}$, we conclude that $W_k \leq \gamma_k W, \forall k \in \{1,\dots, m\}$. Summing over the $k$, we conclude that $W = \sum_{k=1}^m W_k \leq \rho W < W$, a contradiction.

Thus, we can verify that all assumptions of Theorem~\ref{thm:chernoff} are satisfied. Therefore, 
\[
\Prob\Big( (1-\epsilon)W \preceq G \preceq  (1+\epsilon)W \Big ) \geq 1- 2n\e^{-\frac{\epsilon^2c}{4\rho}}.
\]

The conclusion now follows by setting the right hand side equal to $1-\delta$, i.e.,
\[
2n\e^{-\frac{\epsilon^2c}{4\rho}} \leq  \delta \Leftrightarrow c \geq \frac{4\sum_{k=1}^m \gamma_k}{\epsilon^2}\log\frac{2n}{\delta}.
\]

 \end{proof}
 
 \begin{proof}[Proof of Proposition~\ref{thm:unique}]
 Let $U$ denote the number of unique sensors resulting from Algorithm~\ref{algo:randsensor}. Let $I_j \in \{0,1\}$ denote whether sensor $j$ does not get selected or gets selected, respectively. Then, for every $j\in \{1,\dots, m\}$,
\begin{multline*}
\E[I_j] = 0\times \Prob(I_j = 0) + 1 \times \Prob(I_j = 1) \\ = \Prob(I_j = 1) = 1 - \Prob(I_j = 0) = 1- \Big( 1 - p_j \Big)^c,
\end{multline*}
where the final step follows from the fact that the sensors are sampled independently and uniformly. Now, 
\begin{align*}
U = \sum_{j=1}^m I_j \Rightarrow \E[U] = \E \Big [ \sum_{j=1}^m I_j \Big] 
= \sum_{j=1}^m \E[I_j] = \sum_{j=1}^m 1- \Big( 1 - p_j \Big)^c 
&= m - \sum_{j=1}^m \Big( 1 - p_j \Big)^c\\
&= m - m \Big( 1 - \frac{1}{m} \Big)^c,
\end{align*}
in the case of uniform random sampling. Therefore, the expected ratio of the number of unique elements to the total number under uniform random sampling satisfies
\[
\E[U]= m- m\Big( 1 - \frac{1}{m} \Big)^c = m- m\Big(\Big( 1 - \frac{1}{m} \Big)^m\Big)^{\frac{c}{m}}\to m(1- e^{-\frac{c}{m}}),
 \]
  for large $m$. 

\end{proof}

\end{document}